\documentclass[10pt, conference, letterpaper]{IEEEtran}

\usepackage{framed}
\usepackage{hyperref}
\usepackage{tcolorbox}
\usepackage{cite}
\usepackage{amsmath}
\usepackage{amsfonts}
\usepackage{mathtools}
\usepackage{amssymb}
\usepackage{amsthm}
\usepackage[ruled,linesnumbered,noend]{algorithm2e}
\usepackage{cases}
\usepackage{graphicx}
\usepackage{bbm}
\usepackage{subcaption}

\newtheorem{theorem}{{\bf Theorem}}

\newtheorem{lemma}{{\bf Lemma}}

\newtheorem{corollary}{{\bf Corollary}}
\newtheorem{observation}{{\bf Observation}}

\newtheorem{defn}{{\bf Definition}}

\IEEEoverridecommandlockouts

\begin{document}
%
\title{Scheduling in Multi-Hop Wireless Networks With Deadlines}
\author{Nicholas Jones and Eytan Modiano
\thanks{This material is based upon work supported by the Department of the Air Force under Air Force Contract No. FA8702-15-D-0001. Any opinions, findings, conclusions or recommendations expressed in this material are those of the author(s) and do not necessarily reflect the views of the Department of the Air Force.}
\thanks{The authors are with the Laboratory for Information and Decision Systems (LIDS), Massachusetts Institute of Technology, Cambridge, MA 02139, USA (email: jonesn@mit.edu, modiano@mit.edu)}}

\IEEEaftertitletext{\vspace{-0.6\baselineskip}}
\maketitle

\begin{abstract}
    We analyze the problem of scheduling in wireless networks to meet end-to-end service guarantees, defined by instantaneous throughput and hard packet deadlines. Using a network slicing model to decouple the queueing dynamics between flows, we show that the network’s ability to meet hard deadline guarantees under interference is largely influenced by the link scheduling policy. We characterize throughput- and deadline-optimal policies for a solitary flow operating in isolation, which provide bounds on feasibility in the general case with multiple flows. We prove that packet delays can grow arbitrarily large in the multi-flow setting under a worst-case stabilizing policy, showing that queue stability is not sufficient to guarantee tight deadlines. We derive conditions on end-to-end packet delays in terms of link inter-scheduling times, and show that it is possible to make hard guarantees under any interference model by solving a generalized version of the pinwheel scheduling problem. Finally, we introduce a decentralized polynomial-time algorithm which can meet tight end-to-end packet deadlines while achieving near-optimal throughput.
\end{abstract}

\maketitle

\section{Introduction}~\label{sec:intro}

Next-generation wireless networks require strict throughput and delay guarantees for technologies such as real-time control and inference taking place over the network. Due to the need for high reliability, best-effort service is insufficient to meet these needs. The 5G standard supports Quality of Service (QoS) guarantees at the flow level using network slicing~\cite{tmobile2023,5ghub}, and the level of network traffic with QoS requirements is only expected to grow with 6G and beyond. A largely open question, however, is how to efficiently schedule resource-limited wireless networks to guarantee these QoS requirements are met.

Much of the wireless scheduling literature focuses on online scheduling policies and stochastic arrivals, most notably the well-known Max-Weight policy~\cite{tassiulas1990stability} and its variants. Recently, work has been done on online scheduling to meet hard packet deadlines, with policies that guarantee performance within a constant factor of an optimal offline policy under the same traffic arrivals~\cite{yan2021asymptotically, tsanikidis_scheduling_2024}. Unfortunately, without putting constraints on network traffic, these optimality factors can become large, and it is difficult to make guarantees on the service that any single packet will experience. Motivated by this, we consider traffic flows with deterministic arrivals, and a class of policies which may be computed online but are able to make guarantees on both throughput and end-to-end delay a priori.

We consider wireless networks with general topologies, and develop efficient scheduling policies to meet a given set of QoS requirements. In particular, we study the impact that wireless interference plays on end-to-end packet delay and optimize our scheduling policies to account for this. We use a network slicing model akin to virtual circuits to decouple the queueing dynamics between traffic flows. This model is not only practical for making service guarantees, but useful for highlighting the impact of interference. 

In the offline wireless scheduling literature, there is a large body of work on scheduling for throughput maximization~\cite{jain2003impact,hajek_link_1988,kodialam2003characterizing}. Most notably, Hajek and Sasaki designed an algorithm to find a minimum schedule length which meets a set of link demands in polynomial time~\cite{hajek_link_1988}, and Kodialam et al. used Shannon's algorithm for coloring a multigraph to efficiently schedule links while achieving at least $2/3$ of maximum capacity~\cite{kodialam2003characterizing}.

There has also been considerable work on QoS guarantees in networks. One of the first approaches was Cruz's network calculus~\cite{cruz1991calculus,cruz1991calculus2}, which bounds the delay that each packet experiences over multiple hops in a wired network, when arrivals are constrained by a traffic-shaping envelope. Several works have extended this framework to the wireless setting using a variant called stochastic network calculus~\cite{fidler2006end, burchard2006min, li2007network, al2013min}, which bounds the tails of arrival and service processes to obtain a high-probability bound on end-to-end delay. While network calculus is perhaps the most comprehensive analytical QoS framework in the literature, it fails to capture the interference present in wireless networks. It is certainly possible to fix a scheduling policy respecting interference constraints and use calculus to find delay bounds, but optimizing this schedule quickly becomes intractable.

A novel QoS framework for single-hop wireless networks was developed by Hou and Kumar~\cite{hou2009theory}. It assumes a set of sources, each with a single packet that must be delivered by the end of a frame with some probability, thereby meeting a deadline equal to the frame length. They design a policy to ensure the ``delivery ratio,'' or time average fraction of packets which are delivered within each frame, meets a reliability requirement. They extend this framework in~\cite{hou2010utility} to solve utility maximization and in~\cite{hou2013scheduling} to support Markov arrival processes.

Several works have extended a version of this framework to multi-hop. In~\cite{li2012scheduling}, the authors analyze a multi-hop network with end-to-end deadline constraints and develop policies to meet delivery ratio requirements over wired links. In~\cite{liu2019spatial}, the authors design a spatio-temporal architecture with virtual links to solve a similar problem. In~\cite{singh2018throughput} and~\cite{singh2021adaptive}, the authors analyze a multi-hop wireless network with unreliable links. By considering each packet individually, they develop both centralized and decentralized policies for maximizing throughput with hard deadlines, using relaxed link capacity constraints and assuming no interference.

The works closest to ours are~\cite{djukic2007quality} and~\cite{djukic_delay_2009}, which consider a multi-hop wireless network and find a transmission schedule to meet deadline guarantees under constant traffic arrivals. Similar to the works above, they split time into frames and develop a mixed-integer program to find an optimal ordering of link activations within each frame, bounding packet delay as a linear function of the frame length. They show that the optimal ordering can be found in polynomial time under tree topologies. The authors of~\cite{cappanera_link_2009} and~\cite{cappanera_efficient_2011} generalize the arrival processes to calculus-style envelopes and consider the case of sink-tree networks, while routing is incorporated into~\cite{cappanera_optimal_2013}. The authors of~\cite{chilukuri_delay-aware_2015} show that shortening frame lengths and optimizing slot re-use for non-interfering links can improve efficiency.

In this paper we adopt a novel approach to meeting QoS guarantees under wireless interference by removing the dependency between packet deadlines and the traditional notion of frame length. We show that deadline guarantees can be made by bounding link inter-scheduling times, effectively creating unique frames for each link with endpoints at scheduling opportunities, and interleaved in time with the frames of other links. This greatly enhances the flexibility of the schedule, allowing us to simultaneously achieve tighter deadlines and higher throughput, while ensuring that interference constraints are met. Our main contributions can be summarized as follows.

\begin{itemize}
    \item In Section~\ref{sec:isolatedflows}, we develop conditions on throughput and deadline feasibility for a solitary flow. We introduce an \textit{Ordered Round-Robin} scheduling policy in this setting, provide tight delay bounds, and show that it is deadline-optimal when throughput is below a certain threshold. Above this threshold, we show that a greedy policy is deadline-optimal under a total interference model, when only one link can be scheduled at a time, and is a good heuristic in the general case.
    \item In Section~\ref{sec:generalfeasibility}, we prove a tight upper bound on end-to-end packet delays under the worst-case stabilizing policy, and show that delay can grow arbitrarily large even though queues are stable. We then show that under some mild assumptions, end-to-end delay can be bounded as a function of link inter-scheduling times. Using these conditions, we examine the general problem of multiple flows in an arbitrary topology, and we formulate a feasibility problem to jointly satisfy deadline, rate, and interference constraints. We show that under total interference, the problem reduces to an NP-complete problem known as pinwheel scheduling~\cite{holte1989pinwheel}, and we introduce a generalized version of this problem which we call \textit{pinwheel coloring} for an arbitrary interference model.
    \item In Section~\ref{sec:algorithmdesign}, we design a polynomial-time algorithm called \textit{Weighted Greedy Coloring} ($WGC$), which approximately solves the pinwheel coloring problem when coupled with existing pinwheel schedulers. This algorithm can be run in a decentralized fashion at each node, requiring no control overhead. When it returns a feasible solution, it is guaranteed to be a coordinated, conflict-free schedule that meets rate and deadline constraints for all flows.
    \item Finally, in Section~\ref{sec:simulations}, we demonstrate the performance of our algorithms through extensive simulations, verifying that greedy is a good heuristic for solitary flows and that the $WGC$ algorithm can simultaneously achieve tight deadlines and near-optimal throughput for a variety of network topologies and traffic arrivals.
\end{itemize}

An earlier version of this work appeared in the conference proceedings of ACM MobiHoc 2024~\cite{jones2024optimal}.

\section{Preliminaries}~\label{sec:sysmodel}
\vspace{-15pt}
\subsection{System Model}

We consider a wireless network with fixed topology modeled as a directed graph $G=(V,E)$. Time is slotted, with the duration of a slot dictated by the physical and link layers of the network. Each link $e \in E$ has a fixed capacity $c_e$ and can transmit up to this number of packets in each time slot it is activated. Because links share a wireless channel, they are subject to interference, which restricts the links that can be scheduled at the same time. We assume a centralized controller that computes a pre-determined offline schedule of non-interfering links to activate in each slot, and communicates this schedule to each node ahead of time.

We consider general interference models, which can be described using a conflict graph. Define the conflict graph of the network graph $G$ as $G_c = (V_c, E_c)$, and let each vertex $v \in V_c$ correspond to a link in the network graph $G$. An edge exists between two conflict nodes $v$ and $v'$ if and only if their corresponding links interfere in the network graph. We assume that interference is mutual between links, so $G_c$ is an undirected graph. Any clique in the conflict graph represents a set of mutually interfering links, and any independent set in the conflict graph is a set of links that can be activated simultaneously. We denote the set of all feasible activation sets as $\mathcal{M}$.

In some cases it is useful to consider a subset of interference models $\Phi$, which take the form of $\phi$-hop interference constraints. Specifically, for any $\phi$, no two links can be activated at the same time if they are separated by fewer than $\phi$ hops. We will also refer to the interference model itself as $\phi \in \Phi$. In the special cases of no interference and total interference, where only one link can be activated at a time, we define $\phi$ to be $0$ and $|E|-1$ respectively. We denote $\mathcal{M}^{\phi}$ as the set of feasible activation sets under the model $\phi$.

Traffic arrives at the network in the form of flows, which request a level of service from the network. Each flow $f_i \in \mathcal{F}$ is assigned a fixed pre-determined route $\mathcal{T}^{(i)}$ from source to destination, and we denote $\mathcal{T}^{(i)}_j$ as the $j$-th hop in the route. Arrivals are deterministic, with $\lambda_i$ packets belonging to flow $f_i$ arriving at its source node in each slot\footnote{This can be generalized to a network calculus-style envelope, and for simplicity of exposition we assume traffic shaping occurs before packets arrive at the source.}. We assume a fluid traffic model, so $\lambda_i$ need not be integer, and we colloquially use the terms packet and traffic interchangeably throughout the paper. Flows are further characterized by a deadline $\tau_i$ for each $f_i$, and we say that a packet meets its deadline if it is delivered to its destination (over possibly many hops) within $\tau_i$ slots of when it arrives, otherwise the packet expires. Assume that packet arrivals stop at time $T$, but packets remaining in the network are given time to be served by their respective deadlines. 

Each link $e \in \mathcal{T}^{(i)}$ reserves capacity for flow $f_i$ in the form of a network slice, akin to a virtual circuit, with slice width (i.e., capacity) equal to $w_{i,e}$. We will also refer to the slice itself as $w_{i,e}$ when there is no risk of confusion. Because link capacities are fixed, the sum of all slice widths allocated on link $e$ must be bounded by $c_e$. Each slice has its own first-come-first-served queue, which decouples the queueing dynamics between flows. Let $Q_{i,e}(t)$ be the size of the queue belonging to slice $w_{i,e}$ at the beginning of time slot $t$. We assume the network is empty before $t=0$, so $Q_{i,e}(t) = 0$ for all $i$ and $e$, and all $t < 0$.

\subsection{Policy Structure}\label{sec:policy}

Define $\Pi$ as the set of admissible scheduling policies, which are work-conserving and satisfy interference constraints. Let $\mu^{\pi}(t) \in \mathcal{M}$ be the set of links activated at time $t$ under $\pi$, and let $\mu_e^{\pi}(t) = 1$ if $e \in \mu^{\pi}(t)$ and $0$ otherwise. The work-conserving property ensures that each scheduled link serves the smaller of its queue size and its slice capacity each time it is activated.

We are interested in policies which meet the following performance objective.
\begin{framed}
\begin{defn}\label{def:supportingpolicy}
    A policy $\pi \in \Pi$ supports the set of flows $\mathcal{F}$ if and only if all packets in $\mathcal{F}$ reach their destination without expiring, over any finite time horizon $T$.
\end{defn}
\end{framed}

We first show that to meet this objective we need only consider cyclic scheduling policies. Define a scheduling policy $\pi$ as cyclic if $\mu^{\pi}(t) = \mu^{\pi}(t+K^{\pi})$, for all $t \geq 0$ and some positive integer $K^{\pi}$, which we refer to as the scheduling period. Denote the set of all admissible cyclic scheduling policies as $\Pi_c \subseteq \Pi$.

\begin{framed}
\begin{theorem}\label{th:cyclicschedules}
    If there exists a policy in $\Pi$ that supports $\mathcal{F}$, then there must exist at least one such cyclic policy $\pi \in \Pi_c$, and some $t_0 > 0$, such that
    \begin{equation}
        Q_{i,e}^{\pi}(t) = Q_{i,e}^{\pi}(t+K^{\pi}), \ \forall \ f_i \in \mathcal{F}, \ e \in E, \ t_0 \leq t \leq T.
    \end{equation}
\end{theorem}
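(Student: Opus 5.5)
The approach is a finite-state / pigeonhole argument on the joint queue state, followed by a "cut and repeat" construction. Take a policy $\pi \in \Pi$ that supports $\mathcal{F}$ for every finite horizon $T$. Since the policy must keep packets from expiring, every packet of $f_i$ present in the network at time $t$ arrived in one of the last $\tau_i$ slots. Hence every queue is bounded: $Q_{i,e}^{\pi}(t) \le \tau_i \lambda_i$.

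To turn this into a finite state space, I would augment the queue vector with the age profile of the packets in each slice. For each slice, record how much traffic of each age $a \in \{0,\dots,\tau_i-1\}$ it holds. Arrivals are deterministic and service is work-conserving FCFS with fixed slice widths, so the evolution of this augmented state $S(t)$ is a deterministic function of $S(t)$ and the activation set $\mu(t)\in\mathcal{M}$. Deadline satisfaction is a property of the trajectory of $S$: no mass may reach age $\tau_i$ undelivered.

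The main obstacle is finiteness, because the fluid model allows real-valued amounts. I would handle it by noting that, starting from the empty state, the reachable amounts are integer combinations of the $\lambda_i$ and $w_{i,e}$, with bounded coefficients, since ages are capped at $\tau_i$. So only finitely many augmented states are reachable. If this is awkward, an alternative is a compactness (K\"onig's lemma) argument over the finite action set $\mathcal{M}$.

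\textbf{Step 1: a single infinite supporting sequence.} Supportability for every finite $T$ gives, by K\"onig's lemma on the finitely branching tree of activation sequences that have caused no expiry up to time $t$, one infinite sequence $\mu(0),\mu(1),\dots$ under which no packet ever expires.

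\textbf{Step 2: pigeonhole.} Along this sequence the augmented state takes finitely many values. So there exist $t_0 < t_1$ with $S(t_0)=S(t_1)$. Set $K = t_1 - t_0$.

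\textbf{Step 3: the cyclic policy.} Define $\pi'$ to follow $\mu$ on $[0,t_1)$, and thereafter to set $\mu^{\pi'}(t)=\mu^{\pi'}(t-K)$. Because the dynamics are deterministic and the state at $t_1$ equals the state at $t_0$, the trajectory on $[t_1, t_1+K)$ replays the trajectory on $[t_0,t_1)$. By induction, $S(t+K)=S(t)$ for all $t\ge t_0$, so in particular $Q_{i,e}(t+K)=Q_{i,e}(t)$. No expiry ever occurs, since the states visited are exactly those on the expiry-free segment $[0,t_1)$.

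\textbf{Step 4: strict periodicity from $t=0$.} The definition of cyclic requires $\mu(t)=\mu(t+K)$ for all $t\ge 0$, so the transient prefix must be removed. I would reuse the repeated block from time $0$, i.e.\ apply $\mu(t_0 + (t \bmod K))$ starting from the empty network. To show this still supports $\mathcal{F}$, I would use monotonicity: starting from an empty state, which is dominated by $S(t_0)$ in the age-profile order, yields queues and ages no larger than in the original run. This comparison is a standard coupling for work-conserving FCFS with deterministic arrivals.

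Under this modified policy the state eventually coincides with a periodic orbit, so choosing a possibly larger $t_0$ and period $K$ gives the claimed equality. Truncating arrivals at $T$ only affects times $t > T$, which the statement excludes.
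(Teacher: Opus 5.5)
Your argument is correct, given the paper's standing assumption that each $w_{i,e}/\lambda_i$ is rational. It shares the paper's skeleton: deadlines bound the backlog, the state space is finite, pigeonhole gives a repeated state, the intervening block of activations is repeated, and monotonicity of the queue recursion handles the ramp-up from the empty network.

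Some of your machinery is unnecessary. The age profile is a function of the queue lengths, because each flow stays FIFO along its route; this is why the paper can work with $\sum_e Q_{i,e}(t)\le\lambda_i\tau_i$ alone. K\"onig's lemma is also not needed, since the paper simply runs one supporting policy over a horizon longer than the number of states.

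The genuine difference is the ramp-up. The paper asserts that the cyclic policy, started empty, reproduces the original queue trajectory after $t_0$. This fails in general, because the cyclic policy can settle on a strictly smaller orbit. The paper then proves $Q^{\pi}(t)\le Q^{\pi}(t+K^{\pi})$ by induction from $Q^{\pi}(0)=0$. Your comparison of the empty start with $S(t_0)$, under the same shifted activations, is precisely what shows the cyclic policy never exceeds the periodic orbit and so never expires. For the final periodicity you need finiteness again (pigeonhole on state and phase), which yields a period $mK$. That is acceptable, since any period can serve as $K^{\pi}$. Combining your domination bound with the paper's induction does better: along each residue class the queues are nondecreasing, bounded and finitely valued, hence eventually constant, giving period exactly $K$.

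One correction: your finiteness justification does not work as stated. Queue contents are integer combinations of $\lambda_i$ and $w_{i,e}$ whose coefficients (cumulative arrivals and services) are unbounded; only the values are bounded. Finiteness comes from commensurability: with $w_{i,e}/\lambda_i$ rational, flow $i$'s quantities lie in a lattice $\frac{\lambda_i}{D_i}\mathbb{Z}$ intersected with $[0,\lambda_i\tau_i]$. Without rationality the claim fails. Take a single link of width $\sqrt{2}$, fed at rate $1$ and served whenever its queue is at least $\sqrt{2}$: it supports $\tau=3$ but never repeats a state, since $Q(t)=t-b_t\sqrt{2}$. K\"onig's lemma cannot replace finiteness in Step 2.

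Finally, for $t$ near $T$ the equality $Q(t)=Q(t+K)$ refers to times after $T$. It must therefore be read with arrivals continuing, as the paper does with dummy arrivals, rather than treated as ``excluded by the statement.''
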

\end{framed}

\begin{proof}
    See Appendix~\ref{app:cyclicschproof}.
\end{proof}

Here $t_0$ represents the time required for the network to ``ramp up'' from its initial state. The theorem shows that after this ramp-up period, the network will reach a steady state under any cyclic policy $\pi$, and that we can restrict ourselves to such cyclic policies without loss of optimality. For ease of analysis, we assume $t_0 + K^{\pi} \leq T$, so that the network operates in steady-state for at least one full scheduling period.

Under a cyclic policy $\pi$, define the time average activation rate of link $e$ as
\begin{equation}
    \bar{\mu}_e^{\pi} \triangleq \frac{1}{K^{\pi}} \sum_{t=0}^{K^{\pi}-1} \mu^{\pi}_e(t),
\end{equation}
and the number of activations per scheduling period as $\eta_e^{\pi} \triangleq \bar{\mu}_e^{\pi} K^{\pi}$. Clearly for packets to meet any finite deadline, queues must be bounded by a finite constant. We refer to any queue which remains bounded as $T \to \infty$ as stable, which leads to the following condition on activation rates.
\begin{framed}
\begin{corollary}\label{cor:ratestability}
    Queues are stable under a policy $\pi \in \Pi_c$ if and only if
    \begin{equation}\label{eq:ratestabcond}
        \bar{\mu}_e^{\pi} \geq \frac{\lambda_i}{w_{i,e}}, \ \forall \ e \in \mathcal{T}^{(i)}, \ f_i \in \mathcal{F}.
    \end{equation}
\end{corollary}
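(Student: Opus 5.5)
We prove both directions by a flow-conservation argument on each slice queue, using the periodicity of $\pi \in \Pi_c$. Fix a flow $f_i$ and order its route $\mathcal{T}^{(i)} = (e_1,\dots,e_m)$. Slice $w_{i,e}$ can serve at most $w_{i,e}\,\eta_e^{\pi}$ packets per scheduling period, since it is activated $\eta_e^{\pi}$ times per period and serves at most $w_{i,e}$ each time. We then proceed hop by hop along the route.

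\emph{Necessity.} Suppose $\bar{\mu}_{e_j}^{\pi} < \lambda_i / w_{i,e_j}$ for some hop $j$, and take the smallest such $j$. Over $n$ periods, the total traffic flowing into the first hop is $\lambda_i n K^{\pi}$, while the total service on hop $e_j$ is at most $w_{i,e_j}\eta_{e_j}^{\pi} n$. Flow conservation over the hops $e_1,\dots,e_j$ gives
\[
\sum_{k\le j} Q_{i,e_k}(nK^{\pi}) \;\ge\; n\bigl(\lambda_i K^{\pi} - w_{i,e_j}\eta_{e_j}^{\pi}\bigr),
\]
because everything that arrived but has not left hop $e_j$ must still sit in one of these queues. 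The right-hand side grows linearly in $n$. Hence some queue is unbounded as $T\to\infty$, which contradicts stability.

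\emph{Sufficiency.} Assume the inequality~\eqref{eq:ratestabcond} holds on every hop. We induct on $j$ to show that $Q_{i,e_j}$ is bounded and that, after a finite transient, its departures per period equal $\lambda_i K^{\pi}$.

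For the base case, the arrivals to $e_1$ are exactly $\lambda_i K^{\pi}$ per period, and the capacity is $w_{i,e_1}\eta_{e_1}^{\pi} \ge \lambda_i K^{\pi}$. Work conservation then gives a Lindley-type bound: if the queue at the start of a period exceeds $\lambda_i K^{\pi}$, then every activation in that period serves a full $w_{i,e_1}$. So the queue does not increase across that period. Its level at period boundaries is therefore bounded by $\max\{Q(0)+\lambda_i K^{\pi},\ 2\lambda_i K^{\pi}\}$, and within a period it can grow by at most $\lambda_i K^{\pi}$. A bounded queue with constant-rate input must output $\lambda_i K^{\pi}$ per period on average.

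For the inductive step, the departures from $e_{j-1}$ are not constant per slot. However, they are bounded in every window and have average rate $\lambda_i$, so the cumulative arrivals $A_j(t)$ to $e_j$ satisfy $|A_j(t) - \lambda_i t| \le C_{j-1}$, with $C_{j-1}$ determined by the bound on $Q_{i,e_{j-1}}$. We repeat the period-by-period argument with this perturbed input: whenever the queue exceeds $\lambda_i K^{\pi} + 2C_{j-1}$ at a period start, it cannot grow over the next period.

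The main obstacle is this inductive step. The departure process of the previous hop is bursty and interacts with the specific slot pattern of the activations, so the bound cannot be a naive per-period comparison. I would handle it as above, through the cumulative arrival and cumulative service curves: the queue equals the sup over $s\le t$ of arrivals in $(s,t]$ minus available service in $(s,t]$, clipped appropriately. Both quantities deviate from linear functions with slopes $\lambda_i$ and $w\bar{\mu}_e \ge \lambda_i$ by bounded constants, which gives a uniform bound on the queue.
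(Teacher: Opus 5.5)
Your necessity direction is the same flow-conservation argument the paper gives: if $\eta_{e}^{\pi} w_{i,e} < \lambda_i K^{\pi}$, the $f_i$ backlog at or upstream of $e$ grows by a fixed positive amount every period, so some queue on the route is unbounded. (Choosing the smallest violating hop is harmless but unnecessary.)

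The paper's appendix proves only this direction and gives no argument for sufficiency. Your second half is therefore an addition, not an alternative route. It is sound in substance, but only through the cumulative-curve argument in your last paragraph. The per-period steps before it do not work as written:
\begin{itemize}
\item \emph{Base case.} A backlog above $\lambda_i K^{\pi}$ does not force every activation to serve a full $w_{i,e_1}$. If $\eta_{e_1}^{\pi} w_{i,e_1}$ exceeds the backlog plus $\lambda_i K^{\pi}$, some activation must empty the queue. The correct dichotomy is: either all activations are full, so the queue changes by $\lambda_i K^{\pi} - \eta_{e_1}^{\pi} w_{i,e_1} \le 0$, or one of them empties it, leaving at most $\lambda_i K^{\pi}$ at the end of the period.
\item \emph{Inductive step.} Per-period bookkeeping cannot give ``cannot grow.'' Arrivals in one period may exceed $\lambda_i K^{\pi}$ by up to $2C_{j-1}$. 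So a period in which every activation is full only guarantees a change of at most $2C_{j-1} - (\eta_{e_j}^{\pi} w_{i,e_j} - \lambda_i K^{\pi})$, which can be positive. Ruling out growth requires relating the current backlog to past arrivals.
\end{itemize}

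The busy-period form in your last paragraph does exactly that. It handles every hop at once, including the first with $C_0 = 0$, so you can drop the per-period argument entirely. Fix $t$ and let $s<t$ be the last slot in which $e_j$ was activated but served less than $w_{i,e_j}$, so that activation emptied the queue. If there is no such slot, take $s=0$, using $Q_{i,e_j}(0)=0$. Every activation in between was full, so $Q_{i,e_j}(t)$ is at most the arrivals in $[s,t)$ minus $w_{i,e_j}$ times the number of activations in $(s,t)$. Two facts then finish it: any $mK^{\pi}$ consecutive slots contain at least $m\eta_{e_j}^{\pi}$ activations, and $\eta_{e_j}^{\pi} w_{i,e_j} \ge \lambda_i K^{\pi}$. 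Together they give $Q_{i,e_j}(t) \le \lambda_i K^{\pi} + 2C_{j-1}$ for all $t$.

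The induction hypothesis should then just be the cumulative bound $|D_j(t) - \lambda_i t| \le C_j$ on departures, which follows from $D_j(t) = A_j(t) - Q_{i,e_j}(t)$. Your claim that per-period output becomes exactly $\lambda_i K^{\pi}$ after a transient is not needed, and it does not follow from boundedness alone, which only gives averages.
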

\end{framed}

\begin{proof}
    See Appendix~\ref{app:stabilityproof}.
\end{proof}

\begin{figure}
    \centering
    \includegraphics[width=0.45\textwidth]{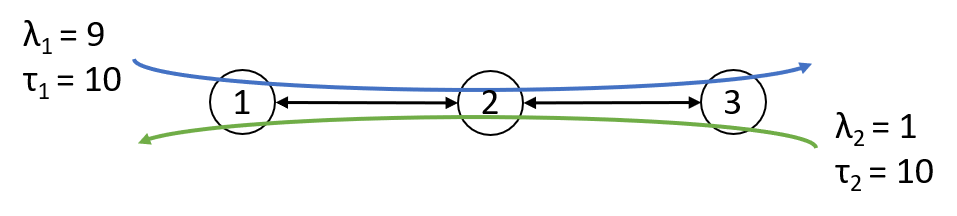}
    \caption{Illustrative Two-Hop Example}
    \label{fig:two-hop-example}
\end{figure}

While stability is necessary for deadlines to be met, it is not sufficient, because queues may still grow too large to satisfy deadlines. Consider the two-hop example in Figure~\ref{fig:two-hop-example}, with flow $f_1$ traveling from node $1$ to node $3$, and $f_2$ traveling in the opposite direction. Both flows have a deadline of $10$ slots, with $\lambda_1 = 9$ and $\lambda_2 = 1$. Assume only one link can be active at a time due to interference, and let the capacity of each link be $27$. Note that while links are drawn as bi-directional, this represents two separate directional links in our model. Because only one flow traverses each link, we can set slice widths equal to link capacity in all cases.

From~\eqref{eq:ratestabcond}, we observe that a necessary condition for supporting $\mathcal{F}$ is that links $(1,2)$ and $(2,3)$ are activated at least $\frac{\lambda_1}{27} = \frac{1}{3}$ of the time, and links $(3,2)$ and $(2,1)$ are activated at least $\frac{\lambda_2}{27} = \frac{1}{27}$ of the time. Consider the cyclic policy
\begin{equation}
    \pi_1 = \{(1,2),(2,3),(3,2),(1,2),(2,3),(2,1)\},
\end{equation}
which repeats itself every $6$ slots, and where $\bar{\mu}_{(1,2)} = \bar{\mu}_{(2,3)} = \frac{1}{3}$ and $\bar{\mu}_{(2,1)} = \bar{\mu}_{(3,2)} = \frac{1}{6}$. This satisfies the necessary activation rates, so all queues are stable under $\pi_1$. In fact, a close examination shows that no more than $27$ packets can arrive at any link in between slots when that link is scheduled, so queues are not only stable but are emptied each time they are served. This means that packets see no queueing delay, and experience only \textit{scheduling delay}, i.e., waiting for a link to be scheduled.

Given this, one can verify that packets arriving at $t=1,4,7,\dots$ experience the largest delay of any $f_1$ packets. These packets must wait $3$ slots before reaching node $2$ and $4$ slots before reaching their destination. Similarly, packets arriving at $t=3,9,15,\dots$ experience the largest delay of any $f_2$ packets. These packets must wait $6$ slots before reaching node $2$, and then another $3$ slots before reaching their destination, resulting in a total delay of $9$ slots. The worst-case delay for both flows satisfies deadline constraints, so $\pi_1$ is a supporting policy. Note that if deadlines were tightened to $\tau_1 = \tau_2 = 8$, the above policy would not satisfy deadlines, but would still remain stable.

Furthermore, there exist many alternative schedules where queues are stable but the original deadlines $\tau_1 = \tau_2 = 10$ are not satisfied. Consider the schedule
\begin{equation}
    \pi_2 = \{(2,3),(2,3),(1,2),(1,2),(2,1),(3,2)\},
\end{equation}
which again repeats itself every $6$ slots, and note that this is simply a re-ordering of $\pi_1$. Because activation rates are unchanged, the queues are guaranteed to be stable as under $\pi_1$, but the re-ordering of the schedule drastically increases the scheduling delay. Consider an $f_1$ packet that arrives at $t=4$. It must wait $5$ slots before it reaches node $2$, followed by another $4$ slots before it reaches node $3$. Similarly, consider an $f_2$ packet that arrives at $t=0$. It must wait $6$ slots before it reaches node $2$, and then another $5$ slots before it reaches node $1$. The worst-case delays increase from $4$ and $9$ slots respectively under $\pi_1$ to $9$ and $11$ slots respectively under $\pi_2$. Simply shuffling the order of the schedule more than doubles packet delays for $f_1$ and causes $f_2$ packets to miss their deadlines. This leads to the following critical observation.
\begin{framed}
    \begin{observation}
    In addition to queue stability, conditions on schedule order are necessary to satisfy deadline guarantees.
\end{observation}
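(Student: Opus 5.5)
The observation is an existence claim, so I will prove it with the two-hop example of Figure~\ref{fig:two-hop-example}. It suffices to exhibit one instance in which two cyclic policies have identical activation rates $\bar{\mu}_e$ for every link, both satisfy the stability condition of Corollary~\ref{cor:ratestability}, and exactly one of them supports $\mathcal{F}$. Then no condition on activation rates alone can characterize the supporting policies, and some condition on the order of activations within the period is needed. The candidates are $\pi_1$ and its permutation $\pi_2$. The instance has capacity $27$ on every link, $\lambda_1=9$, $\lambda_2=1$, $\tau_1=\tau_2=10$, and total interference.

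\emph{Step 1 (stability).} Both schedules have period $K=6$ with $\eta_{(1,2)}=\eta_{(2,3)}=2$ and $\eta_{(3,2)}=\eta_{(2,1)}=1$. Hence $\bar\mu=1/3\ge 9/27$ on the $f_1$ links and $\bar\mu=1/6\ge 1/27$ on the $f_2$ links. By Corollary~\ref{cor:ratestability} all queues are stable under both policies.

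\emph{Step 2 (no queueing delay, so only scheduling delay).} Consider a link and the gap between two of its consecutive activations. For each link I will bound the traffic that reaches it during that gap. On link $(1,2)$ the largest gap is $3$ slots, so at most $9\cdot 3=27$ packets arrive. For downstream links, the arrivals are the batches released by the upstream link, and each batch is at most the traffic that accumulated upstream. Checking the cyclic pattern of each policy shows that no gap ever carries more than $27$ packets. Every activation therefore empties its queue, and each packet's delay is a sum of waits until the next activation of each hop on its route. Establishing this for $\pi_2$ is the step I expect to be the main obstacle. Under $\pi_2$ the activations of $(2,3)$ are clustered, so I must check that the FCFS slice never carries leftover backlog forward. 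The same check is needed on the reverse path, where a single $27$-packet activation every $6$ slots easily dominates $6$ packets of $f_2$ arrivals. The fallback is a short direct simulation of $Q_{i,e}(t)$ over one period once steady state is reached (Theorem~\ref{th:cyclicschedules}).

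\emph{Step 3 (delays).} With zero queueing delay, the worst-case delay is the maximum over the arrival phase $t \bmod 6$ of the chained waiting times. Under $\pi_1$, enumerating the six phases gives worst cases of $4$ slots for $f_1$ (arrival at $t\equiv1$) and $9$ slots for $f_2$ (arrival at $t\equiv3$). Both are at most $10$, so $\pi_1$ supports $\mathcal{F}$. Under $\pi_2$, an $f_2$ packet arriving at $t=0$ waits until slot $5$ for $(3,2)$ and then until slot $10$ for $(2,1)$, a delay of $11>\tau_2$. So $\pi_2$ fails even though its rates equal those of $\pi_1$. Since $\pi_2$ is a reordering of $\pi_1$, the difference between the two outcomes can only come from schedule order, which proves the observation.
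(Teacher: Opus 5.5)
Your argument is the paper's own. It uses the same two-hop instance, the same pair $\pi_1,\pi_2$ with identical activation rates, stability via Corollary~\ref{cor:ratestability}, and the same delays: $4$ and $9$ under $\pi_1$, and $11$ for $f_2$ under $\pi_2$.

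\textbf{Step 2 is false for $\pi_2$.} You claim that under \emph{both} policies no gap carries more than $27$ packets, so every activation empties its queue. Under $\pi_2$, link $(1,2)$ is active only in slots $2,3 \pmod 6$, so it is idle in slots $4$--$7$. At slot $8$, $45>27$ packets of $f_1$ are waiting (arrivals in slots $4$ through $8$), and $18$ of them carry over to slot $9$. Downstream, $(2,3)$ receives $54$ packets between its activations at slots $7$ and $12$, so $27$ again wait an extra slot. The fallback simulation you propose would therefore refute the claim, not confirm it. The paper asserts the zero-queueing property only for $\pi_1$.

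\textbf{This does not break the proof, because you never need zero queueing for $\pi_2$.} A packet can never be served before the next activation of its current link. So the chained scheduling-delay computation is always a \emph{lower} bound on delay. Hence the $f_2$ packet arriving at $t=0$ has delay at least $11>\tau_2$ under $\pi_2$, whatever happens in the queues. In fact the $f_2$ links do empty, since only $6$ packets arrive per $6$-slot gap. You need zero queueing only to turn the chained waits into an \emph{upper} bound for $\pi_1$, and there it holds: each $f_1$ link receives exactly $27$ packets per $3$-slot gap, and each $f_2$ link receives $6$ per $6$-slot gap. Restrict Step 2 to $\pi_1$ and use the lower-bound argument for $\pi_2$, and the proof is complete.
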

\end{framed}
A large focus of this work is defining these conditions in a meaningful way so that we can design policies which satisfy them in polynomial time. To begin, we extract the following necessary and sufficient condition for meeting deadlines from Theorem~\ref{th:cyclicschedules}.

\begin{framed}
\begin{corollary}\label{cor:deadlinecondition}
    All flow $f_i$ packets meet their deadlines under a scheduling policy $\pi \in \Pi_c$ if and only if
    \begin{equation}\label{eq:queuebound}
        \max_{0 \leq t \leq T} \sum_{e \in \mathcal{T}^{(i)}} Q_{i,e}^{\pi}(t) \leq \lambda_i \tau_i,
    \end{equation}
    for any finite $T$.
\end{corollary}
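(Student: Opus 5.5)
Since each slice is a first-come-first-served queue and a packet that has left a slice never returns to it, a packet's delay is governed by the total amount of flow-$f_i$ traffic still in the network when it arrives. Let $Q_i(t)\triangleq\sum_{e\in\mathcal{T}^{(i)}}Q_{i,e}^{\pi}(t)$ denote the total flow-$f_i$ backlog. Define the cumulative arrival function $A_i(t)=\lambda_i t$ and the cumulative departure function $D_i(t)$, which counts flow-$f_i$ traffic delivered to the destination by time $t$. Then $Q_i(t)=A_i(t)-D_i(t)$.

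\textbf{Step 1: end-to-end FIFO along the route.} Every slice along $\mathcal{T}^{(i)}$ is FCFS, and all $f_i$ packets follow the same route. By induction on the hop index $j$, packets therefore leave $\mathcal{T}^{(i)}_j$ in the order they arrived at the source. So the whole path acts as a single FIFO server for $f_i$, and the packet (fluid unit) arriving at time $s$ departs at
\[
d(s)=\min\{t: D_i(t)\ge A_i(s)\}.
\]
Hence every packet meets its deadline if and only if $D_i(s+\tau_i)\ge A_i(s)=\lambda_i s$ for all $s$.

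\textbf{Step 2: translate to a backlog condition.} Using $Q_i(t)=\lambda_i t-D_i(t)$, the deadline condition at $t=s+\tau_i$ reads
\[
\lambda_i(t-\tau_i)\le \lambda_i t-Q_i(t),
\]
which is $Q_i(t)\le\lambda_i\tau_i$. This gives both directions. If some packet arriving at $s$ expires, then $Q_i(s+\tau_i)>\lambda_i\tau_i$, so \eqref{eq:queuebound} fails. Conversely, if \eqref{eq:queuebound} fails at time $t$, then the packet arriving at $t-\tau_i$ is still in the network at $t$, so it expires. The case $t<\tau_i$ needs a separate check: there $Q_i(t)\le\lambda_i t<\lambda_i\tau_i$, so the condition holds automatically.

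\textbf{Step 3: finite horizon and the cyclic regime.} After $T$ arrivals stop, so the backlog only decreases and the maximum over $[0,T]$ suffices. Packets arriving near $T$ must still be checked, but since $Q_i$ is nonincreasing after $T$, the inequality holding at $T$ carries over. Theorem~\ref{th:cyclicschedules} is used here: because the queues are periodic after $t_0$, the maximum over any finite horizon is attained within $[0,t_0+K^{\pi}]$. The statement is therefore well posed, and holding it for one finite $T$ implies it for all $T$.

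\textbf{Main obstacle.} The delicate point is the discrete-time boundary convention. We must fix when in a slot arrivals occur relative to the measurement $Q(t)$ taken at the beginning of the slot, and show that the inequality is $\le$ rather than $<$ so that it matches "within $\tau_i$ slots". I would fix the convention that arrivals in slot $s$ are counted at the start of slot $s+1$. I would then verify the equivalence $D_i(s+\tau_i)\ge\lambda_i s \iff Q_i(s+\tau_i)\le\lambda_i\tau_i$ under that convention, using the fluid model so that partial packets cause no rounding issues.
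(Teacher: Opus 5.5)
Your Steps 1 and 2 are the paper's argument made rigorous. The appendix proof of Theorem~\ref{th:cyclicschedules} argues informally that, with FCFS slices and exactly $\lambda_i$ arrivals per slot, at most $\lambda_i\tau_i$ packets are in flight if and only if all of them arrived in the last $\tau_i$ slots. Your cumulative-curve identity and the hop-by-hop FIFO induction are a clean version of that.

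The problem is Step 3. The translation in Step 2 rests on $Q_i(t)=\lambda_i t-D_i(t)$, which holds only while arrivals continue. For $t>T$ you have $Q_i(t)=\lambda_i T-D_i(t)$. So a packet arriving at $s$ with $s+\tau_i>T$ meets its deadline iff $Q_i(s+\tau_i)\le\lambda_i(T-s)$, which is strictly stronger than $Q_i(s+\tau_i)\le\lambda_i\tau_i$. In particular, the last packet needs the route to be completely drained by $T+\tau_i$, and monotonicity of $Q_i$ after $T$ does not give this.

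For a concrete failure, take a single link with $\lambda_i=1$, $\tau_i=2$, slice width at least $2$, and a cyclic schedule of period $10$ that activates the link only in slot $9$. Choose $T$ small enough that at most two slots of arrivals occur. Then $Q_i(t)\le 2=\lambda_i\tau_i$ at every $t$, before and after $T$, yet the first packets wait roughly $9$ slots.

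The same example refutes your claim that holding \eqref{eq:queuebound} for one finite $T$ implies it for all $T$. Theorem~\ref{th:cyclicschedules} only guarantees periodic queues for some cyclic policy built from a supporting one, and only after $t_0$. It says nothing about an arbitrary $\pi\in\Pi_c$. So read with a fixed $T$, the statement is false, and your proof breaks exactly there.

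What is missing is the argument at the end of that appendix proof, which is also why the statement quantifies over every finite $T$. After $T$, feed dummy traffic at rate $\lambda_i$. Under FCFS, later traffic never changes when earlier traffic is served: at each activation a unit of fluid is served iff the amount queued ahead of it is below the slice width, and nothing arriving behind it affects that. Inducting over the hops, every real packet has the same departure time in the extended system.

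In the extended system $A_i(t)=\lambda_i t$ for all $t$, so your Step 2 applies at every time. The hypothesis for horizon $s+\tau_i$ then gives $Q_i(s+\tau_i)\le\lambda_i\tau_i$ there, so the real packet arriving at $s$ is delivered by $s+\tau_i$. The necessity direction needs no change, since a violation at some $t\le T$ already exhibits a late packet that arrived at $t-\tau_i\le T$.
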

\end{framed}

\begin{proof}
    Follows from the proof of Theorem~\ref{th:cyclicschedules}.
\end{proof}

Before analyzing the general case, it is instructive to characterize the feasible guarantees that can be made to a solitary flow $f_i$ under a policy $\pi \in \Pi_c$, which is the focus of the next section.

\section{Solitary Flow Feasibility}\label{sec:isolatedflows}

In this section we consider a solitary flow that operates in isolation, so the policy can be optimized independently of other traffic. We can then view a flow $f_i$'s route $\mathcal{T}^{(i)}$ as a line network with the source at one end and the destination at the other. In addition, we assume this line network follows a $\phi$-hop interference model for ease of analysis. Denote the set of jointly achievable rate and deadline guarantees for $f_i$ under an interference model $\phi$ and fixed set of slices $\boldsymbol{w}$ as $\Lambda_i^{\phi}(\boldsymbol{w})$.

In the general setting, with flows on separate routes coupled through the scheduling policy, the jointly achievable region over all flows is a subset of what can be achieved by each flow in isolation. Analyzing $\Lambda^{\phi}_i(\boldsymbol{w})$ allows us to define not only optimal policies for a single flow, but also bounds on feasibility in the general case. Because we are focused on a single flow in the remainder of this section, we drop the subscript $i$ for ease of notation and note the results hold for all flows without loss of generality. We begin by characterizing throughput optimality.

\subsection{Throughput Optimality}

Define $\lambda^*(\pi,\boldsymbol{w})$ as the maximum rate a flow can take as $\tau \to \infty$, under fixed slice widths $\boldsymbol{w}$ and while supported by a policy $\pi$. We omit the dependency on $\phi$ for ease of notation. From Corollary~\ref{cor:ratestability}, it immediately follows that
\begin{equation}\label{eq:lambdabound}
    \lambda^*(\pi,\boldsymbol{w}) = \min_{e \in \mathcal{T}} \bar{\mu}_e^{\pi} w_e,
\end{equation}
and we define a throughput-optimal policy as follows.

\begin{framed}
\begin{defn}
    A policy $\pi \in \Pi_c$ is throughput-optimal for a set of slice widths $\boldsymbol{w}$ if $\lambda^*(\pi,\boldsymbol{w}) \geq \lambda^*(\pi',\boldsymbol{w})$ for all $\pi' \in \Pi_c$.
\end{defn}
\end{framed}
In particular, any throughput-optimal policy is a solution to 
\begin{align}\label{eq:thoptformulation}
\begin{aligned}
    \max_{\pi \in \Pi_c} &\min_{e \in \mathcal{T}} \bar{\mu}_e^{\pi} w_e \\
    \text{s.t.} \ &\mu^{\pi}(t) \in \mathcal{M}^{\phi} , \ \forall \ 0 \leq t \leq K^{\pi},
\end{aligned}
\end{align}
and we denote the solution as $\lambda^*(\boldsymbol{w})$. This has a closed-form solution given by the following.

\begin{framed}
\begin{theorem}\label{th:thoptsoln}
    For any set of slices $\boldsymbol{w}$ and interference model $\phi \in \Phi$,
    \begin{equation}\label{eq:thopteq}
        \lambda^*(\boldsymbol{w}) = \frac{1}{\phi+1} \min_{1 \leq j \leq |\mathcal{T}|-\phi} H(w_j,\dots,w_{j+\phi})
    \end{equation}
    where $H(\cdot)$ represents the harmonic mean.
\end{theorem}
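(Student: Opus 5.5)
The proof has two halves: an upper bound on $\lambda^*(\boldsymbol{w})$ from interference constraints on windows of $\phi+1$ consecutive links, and a matching cyclic policy.

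\emph{Converse.} Fix any $\pi \in \Pi_c$ and write $\lambda = \min_{e} \bar{\mu}_e^{\pi} w_e$, so that $\bar{\mu}_e^{\pi} \geq \lambda / w_e$ for every $e \in \mathcal{T}$. On the line network under the $\phi$-hop model, any $\phi+1$ consecutive links $j,\dots,j+\phi$ pairwise interfere, since any two of them are separated by fewer than $\phi$ hops. They therefore form a clique in $G_c$, and at most one of them is active in each slot. Averaging over a period gives $\sum_{k=j}^{j+\phi} \bar{\mu}_k^{\pi} \leq 1$. Combining the two facts,
\[
\lambda \sum_{k=j}^{j+\phi} \frac{1}{w_k} \leq 1,
\qquad\text{so}\qquad
\lambda \leq \frac{1}{\sum_{k=j}^{j+\phi} 1/w_k} = \frac{1}{\phi+1} H(w_j,\dots,w_{j+\phi}).
\]
Minimizing over $j$ gives the upper bound. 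The boundary cases need a check. For $\phi=0$ the window is a single link and $\mu_e \le 1$. For $\phi = |E|-1$ with $|\mathcal{T}| \le \phi$, I would interpret the window as the whole route, which is consistent with total interference.

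\emph{Achievability.} Set $\lambda^\star$ equal to the right-hand side of \eqref{eq:thopteq}, and target rates $\bar{\mu}_e = \lambda^\star / w_e$. By construction these satisfy $\sum_{k=j}^{j+\phi} \bar{\mu}_k \leq 1$ for every window. I need a cyclic, interference-free schedule achieving these rates. I would first assume the $w_e$ are rational, or approximate them arbitrarily closely, so that the $\bar{\mu}_e$ are rational with a common denominator $K$. This gives integer demands $\eta_e = \bar{\mu}_e K$ in which every window of $\phi+1$ consecutive links demands at most $K$ slots.

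The schedule comes from coloring with slot indices. Assign to each link $e$ a set $S_e \subseteq \{0,\dots,K-1\}$ with $|S_e| = \eta_e$, such that links within $\phi$ hops of each other receive disjoint sets. The key observation is that for interval interference on a line, the constraints are exactly that sets of links at distance at most $\phi$ are disjoint. I would construct the sets by processing links in order $1,2,\dots$ and allocating slots cyclically modulo $K$ in a wrap-around, McNaughton-style fashion: link $e$ gets the next $\eta_e$ consecutive residues after link $e-1$'s block. Then the blocks of any $\phi+1$ consecutive links are contiguous arcs of total length at most $K$ on the cycle $\mathbb{Z}_K$, so they cannot overlap. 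Links farther apart are not constrained. Repeating this $K$-slot frame gives $\pi\in\Pi_c$ with $\min_e \bar{\mu}_e w_e = \lambda^\star$.

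\emph{Main obstacle.} The delicate step is this achievability construction, specifically verifying that wrap-around allocation never creates a conflict between links at distance exactly $\le\phi$. Any two such links lie in a common window, and consecutive arcs summing to at most $K$ are disjoint, so the construction should hold. A secondary issue is whether the $\phi$-hop definition makes exactly $\phi+1$ consecutive links mutually interfering. I would align the window size with the paper's convention, checking it against $\phi=|E|-1$ (total interference) and $\phi=0$. Irrational widths are handled by the supremum/limit implicit in the definition of $\lambda^*$.
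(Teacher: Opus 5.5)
Your converse is the paper's argument: each window of $\phi+1$ consecutive links is a clique, so $\sum_{k=j}^{j+\phi}\bar{\mu}_k \le 1$, and combining this with $\bar{\mu}_e \ge \lambda/w_e$ gives the harmonic-mean bound. Your wrap-around allocation on $\mathbb{Z}_K$ correctly proves the achievability direction, which the paper only asserts when it calls the window constraints ``equivalent'' to the interference constraint.
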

\end{framed}

\begin{proof}
    The throughput-optimal formulation~\eqref{eq:thoptformulation} can be reformulated as 
    \begin{align}
    \begin{aligned}
        \max \ &\lambda(\boldsymbol{w}) \\
        \text{s.t.} \ &\lambda(\boldsymbol{w}) \leq \bar{\mu}_e w_e, \ \forall \ e \in \mathcal{T}, \\
        &\sum_{l=j}^{j+\phi} \bar{\mu}_l \leq 1, \ \forall \ 1 \leq j \leq |\mathcal{T}|-\phi,
    \end{aligned}
    \end{align}
    where link $j$ is understood to mean the $j$-th hop on the route $\mathcal{T}_j$. The second constraint is equivalent to the interference constraint in~\eqref{eq:thoptformulation} because only one of each set of $\phi+1$ adjacent links can be activated at once. Combining the two constraints yields 
    \begin{equation}
        \lambda(\boldsymbol{w}) \sum_{l=j}^{j+\phi} \frac{1}{w_l} \leq 1, \ \forall \ 1 \leq j \leq |\mathcal{T}|-\phi,
    \end{equation}
    and rearranging the equation yields
    \begin{equation}
        \lambda(\boldsymbol{w}) \leq \frac{1}{\phi+1} H(w_j,\dots,w_{j+\phi}), \ \forall \ 1 \leq j \leq |\mathcal{T}|-\phi.
    \end{equation}
    The maximum $\lambda^*(\boldsymbol{w})$ is then equal to the minimum over $j$, which completes the proof.
\end{proof}

This result shows that, for a fixed set of slices, the maximum supported rate scales as $1/\phi$, which approximates the average activation rate of each link. The harmonic mean of a set is dominated by the minimum, so this result also shows that the smallest slice (i.e., the bottleneck) has the largest impact on achievable rates, which matches our intuition.

\subsection{Deadline Optimality}

We next turn to characterizing feasible deadline guarantees. Define $\tau^*(\pi, \boldsymbol{w}, \lambda)$ as the minimum deadline a flow can meet, with rate $\lambda$ and slice widths $\boldsymbol{w}$, and under a policy $\pi$. Equivalently, this quantity is the largest delay seen by any packet in the flow, so we will sometimes refer to $\tau^*$ as maximum delay when appropriate. We define a deadline-minimizing policy for these parameters as follows.
\begin{framed}
\begin{defn}
    A policy $\pi \in \Pi_c$ is deadline-minimizing for rate $\lambda$ and slice widths $\boldsymbol{w}$ if $\tau^*(\pi, \boldsymbol{w}, \lambda) \leq \tau^*(\pi', \boldsymbol{w}, \lambda)$ for all $\pi' \in \Pi_c$.
\end{defn}
\end{framed}

There is clearly a relationship between $\tau^*$ and $\lambda$, because as rates increase, it becomes harder for every packet to meet its deadline. To capture the inherent scheduling delay caused by interference, we define the rate-independent quantity
\begin{equation}
    \tau^*(\pi) \triangleq \lim_{\lambda \to 0} \tau^*(\pi, \boldsymbol{w}, \lambda),
\end{equation}
and note that this is the smallest deadline that $\pi$ can guarantee for any $\lambda > 0$ and any set of slices. This allows us to define a \textit{deadline-optimal} policy, independent of arrival rate and slices.
\begin{framed}
\begin{defn}
    A policy $\pi \in \Pi_c$ is deadline-optimal if $\tau^*(\pi) \leq \tau^*(\pi')$ for all $\pi' \in \Pi_c$.
\end{defn}
\end{framed}

To avoid confusion, we note the distinction between the terms \textit{deadline-minimizing} when speaking in terms of a specific arrival rate, and \textit{deadline-optimal} when speaking independently of arrival rates. 

In the example in the previous section, we saw that schedule order plays an important role in making deadline guarantees, and we begin to formalize this here with the concept of inter-scheduling times. Denote the set of time slots where link $e$ is scheduled under a policy $\pi$ as  $\pi(e) \triangleq \{ 0 \leq t \leq K^{\pi} \ | \ \mu_e^{\pi}(t) = 1 \}$. Then define the minimum inter-scheduling time of links $e$ and $e'$ to be the smallest time interval between consecutive scheduling events of links $e$ and $e'$. Formally, we define this as
\begin{equation}
    \underline{k}_{e,e'}^{\pi} \triangleq \min_{t_1 \in \pi(e)} \min_{t_2 \in \pi(e')} (t_2 - t_1) \bmod K^{\pi},
\end{equation}
and similarly define the maximum inter-scheduling time as
\begin{equation}
    \overline{k}_{e,e'}^{\pi} \triangleq \max_{t_1 \in \pi(e)} \min_{t_2 \in \pi(e')} (t_2 - t_1) \bmod K^{\pi}.
\end{equation}

For ease of notation, we denote the min and max inter-scheduling times of a link $e$ with itself as $\underline{k}_e^{\pi}$ and $\overline{k}_e^{\pi}$ respectively. Using these quantities, we can bound $\tau_i^*(\pi)$ as follows.

\begin{framed}
\begin{lemma}\label{lemma:taulowerbound}
    For any admissible policy $\pi \in \Pi_c$,
    \begin{multline}\label{eq:taulowerbound}
        \underline{k}_1^{\pi} + \sum_{1 \leq j < |\mathcal{T}|} \underline{k}_{j,j+1}^{\pi} \leq \tau_i^*(\pi) \\ \leq \overline{k}_1^{\pi} + \sum_{1 \leq j < |\mathcal{T}|} \overline{k}_{j,j+1}^{\pi},
    \end{multline}
    where we slightly abuse notation to denote $\mathcal{T}_j$ as link $j$.
\end{lemma}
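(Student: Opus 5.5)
Because $\tau^*(\pi)$ is defined as the limit $\lambda \to 0$, I will work in the vanishing-rate regime. Fix slices $\boldsymbol{w}$. Once $\lambda$ is small enough, at most $w_e$ packets arrive at any slice between two consecutive activations of link $e$. So, after the ramp-up of Theorem~\ref{th:cyclicschedules}, every queue is emptied each time it is served, exactly as in the two-hop example. In this regime a packet sees no queueing delay, only scheduling delay. Its end-to-end delay is then a deterministic function of its arrival slot $t_a \bmod K^{\pi}$ and the schedule alone, and $\tau^*(\pi)$ is the maximum of this function over arrival times.

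For a packet arriving at slot $t_a$ I will trace its path hop by hop. It waits until the first activation $s_1 \geq t_a$ of link $1$. Under the convention that packets arriving in a slot are served at the earliest in the next activation, it then waits until the first activation $s_2 > s_1$ of link $2$, and so on up to $s_{|\mathcal{T}|}$. The delay decomposes as $(s_1 - t_a) + \sum_{j} (s_{j+1}-s_j)$, plus a possible constant unit for the final transmission, which I will fix by matching the paper's timing convention. By construction, each difference $s_{j+1}-s_j$ is the distance from an activation $s_j \in \pi(j)$ to the next activation of link $j+1$, taken mod $K^{\pi}$. This is exactly the inner minimum in the definition of the inter-scheduling times. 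Hence
\[
\underline{k}^{\pi}_{j,j+1} \leq s_{j+1}-s_j \leq \overline{k}^{\pi}_{j,j+1}.
\]

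It remains to bound the first term $s_1 - t_a$. For the upper bound, $t_a$ lies between two consecutive activations of link $1$, so the wait is at most the largest gap, $\overline{k}_1^{\pi}$. Summing gives the right-hand inequality for every packet, hence for $\tau^*(\pi)$.

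For the lower bound I only need to exhibit one packet whose delay is at least the left-hand side. I will take a packet arriving just after an activation of link $1$. Because it misses that activation, it waits at least $\underline{k}_1^{\pi}$ for the next one. Each later hop then contributes at least $\underline{k}^{\pi}_{j,j+1}$. Since $\tau^*(\pi)$ is a maximum over packets, it is at least this packet's delay. Positive arrivals occur in every slot, so such a packet exists in steady state.

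The main obstacle is the timing convention: whether a packet can be forwarded in the very slot it arrives, and whether the first-hop wait is measured from the arrival slot or from the preceding activation. This determines whether $\underline{k}_1^{\pi}$ and $\overline{k}_1^{\pi}$ appear exactly, or off by one. I will resolve it by checking against the two-hop example: under $\pi_1$ the bound should give delay $4$ for $f_1$ and $9$ for $f_2$. I will then fix the convention so that the first-hop wait equals the gap from the previous activation of link $1$ to the next one.

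A second, minor point is justifying rigorously that the limit $\lambda \to 0$ eliminates queueing delay. This follows because the maximum per-cycle backlog is $\lambda \overline{k}_e^{\pi}$, which is smaller than $w_e$ once $\lambda < \min_e w_e / K^{\pi}$.
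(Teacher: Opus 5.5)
Your proposal is correct and follows essentially the same argument as the paper: split each packet's delay into the source wait plus the consecutive-link waits $s_{j+1}-s_j$, sandwich each term between the corresponding min and max inter-scheduling times, and use $\lambda \to 0$ so that every activation empties its queue for the upper bound. Your lower-bound step is more careful than the paper's, which simply asserts that every packet waits $\underline{k}_1^{\pi}$ at the source; you instead exhibit a packet arriving just after an activation of link $1$ and absorb the final transmission slot into that first-hop gap, which is consistent with the $4$ and $9$ worst-case delays under $\pi_1$.
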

\end{framed}

\begin{proof}
    We first show the lower bound. Any packet delivered to its destination must traverse all links on its route in order, and under a policy $\pi$, every packet served by link $e$ that arrives at link $e+1$ must wait at least $\underline{k}_{e,e+1}^{\pi}$ slots before being served. Therefore, the smallest amount of time between being served at the source link and being served at the destination link is $\sum_{1 \leq j < |\mathcal{T}|} \underline{k}_{j,j+1}^{\pi}$, and packets must wait $\underline{k}_1^{\pi}$ slots from when they arrive at the source link until they are served.

    Similarly, because $\lambda \to 0$, all queued packets are served each time a link is scheduled, so a packet must wait at most $\overline{k}_1^{\pi}$ slots to be served at the source link and $\sum_{1 \leq j < |\mathcal{T}|} \overline{k}_{j,j+1}^{\pi}$ slots from when it is served at the source link to when it is served at the destination link. The result follows.
\end{proof}

These bounds show that in order to keep packet delays small, we should minimize inter-scheduling times between consecutive links on a route. We will show that deadline optimality is in fact achieved by a specific \textit{Ordered Round-Robin} ($ORR$) policy, under which the upper and lower bounds in Lemma~\ref{lemma:taulowerbound} are tight and minimized over all admissible policies.

Recall that under an interference model $\phi$, links separated by fewer than $\phi$ hops cannot be scheduled in the same slot. Then define the $ORR$ policy as follows. At each time $t$, we activate link $\mathcal{T}_j$, where $j=t \bmod \phi$, along with every $\phi+1$ subsequent links. Then at each slot, links at equally spaced intervals of $\phi+1$ hops are activated, beginning with links $\{0, \phi+1, 2 (\phi+1),\dots \}$ at time $t=0$, links $\{1, 1+(\phi+1), 1+2 (\phi+1), \dots \}$ at $t=1$, and so on, where again link $j$ refers to $\mathcal{T}_j$. Note that this schedule has a period $K^{ORR} = \phi+1$.

\begin{framed}
\begin{theorem}\label{th:orrbound}
    The $ORR$ policy is deadline-optimal under any interference model $\phi \in \Phi$, with a maximum packet delay
    \begin{equation}
        \tau^*(ORR) = |\mathcal{T}|+\phi,
    \end{equation}
    and activation rates
    \begin{equation}
        \bar{\mu}_e^{ORR} = \frac{1}{\phi+1}, \ \forall \ e \in \mathcal{T}.
    \end{equation}
\end{theorem}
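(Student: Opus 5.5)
The plan has two parts. First, compute the activation rates and delay of $ORR$ exactly using Lemma~\ref{lemma:taulowerbound}. Second, prove a matching lower bound $\tau^*(\pi)\ge|\mathcal{T}|+\phi$ for every $\pi\in\Pi_c$.

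\emph{Rates and delay of $ORR$.} Under $ORR$ with period $K^{ORR}=\phi+1$, link $\mathcal{T}_j$ is active exactly at the slots $t\equiv j \pmod{\phi+1}$. I read the indexing as modulo $\phi+1$, consistent with the stated period. This schedule is admissible: links active together are exactly $\phi+1$ hops apart, so they never violate the $\phi$-hop constraint. Each link is active once per period, so $\bar{\mu}_e^{ORR}=1/(\phi+1)$ immediately. For the delay, each link's self inter-scheduling time is exactly $\phi+1$, i.e.\ $\underline{k}_1=\overline{k}_1=\phi+1$. Link $j+1$ is always active in the slot right after link $j$, so $\underline{k}_{j,j+1}=\overline{k}_{j,j+1}=1$. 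The two sides of Lemma~\ref{lemma:taulowerbound} therefore coincide, and $\tau^*(ORR)=(\phi+1)+(|\mathcal{T}|-1)=|\mathcal{T}|+\phi$.

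\emph{Lower bound for arbitrary $\pi$.} Applied directly, the lower bound of Lemma~\ref{lemma:taulowerbound} only yields $\tau^*(\pi)\ge |\mathcal{T}|$. I therefore need a worst-case argument that uses interference, and I will work in the regime $\lambda\to 0$, where every activation empties the queue. Any packet's service times satisfy $s_1<s_2<\dots<s_{|\mathcal{T}|}$, which gives $|\mathcal{T}|$ slots of unavoidable transit. The extra $\phi$ must come from the clique $\{\mathcal{T}_1,\dots,\mathcal{T}_{m}\}$ with $m=\min(|\mathcal{T}|,\phi+1)$: these links pairwise interfere, so at most one of them is active in any slot.

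I would fix two consecutive activations $a<a'$ of link $1$ and compare two packets. Packet $Q$ is served at $a$. Packet $P$ arrives at $a+1$ and so must wait until $a'$. Write $g=a'-a$.
\begin{itemize}
\item If $g\ge\phi+1$, then $P$ alone has delay at least $(g-1)+|\mathcal{T}|\ge |\mathcal{T}|+\phi$, up to the paper's off-by-one convention for counting the arrival slot.
\item If $g\le\phi$, I will use the clique constraint. During the $g-1$ slots strictly between $a$ and $a'$, $Q$ can advance through at most $g-1$ further links of the clique. After $a'$, $P$ and $Q$ compete for the remaining clique links $\mathcal{T}_2,\dots,\mathcal{T}_m$, which are served one slot at a time. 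Counting the slots needed for both packets to clear the clique, I aim to show that $\max(\text{delay}(P),\text{delay}(Q))\ge |\mathcal{T}|+\phi$.
\end{itemize}
If the two-packet counting is not tight enough, the fallback is to iterate the argument over the packet arriving just after each activation in the chain, i.e.\ an induction on $j\le m$ showing that the $j$-th clique link adds a slot of delay.

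The main obstacle is this $g\le\phi$ case. It needs a careful bookkeeping of how the $m-1$ clique slots are split between $P$ and $Q$, and of the edge case $|\mathcal{T}|<\phi+1$, where the clique is the whole route and $\phi$ should be interpreted as effectively $|\mathcal{T}|-1$. Once the lower bound $\tau^*(\pi)\ge|\mathcal{T}|+\phi$ is established for every $\pi\in\Pi_c$, it combines with the computation for $ORR$ to give deadline-optimality.
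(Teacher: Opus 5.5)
\paragraph{Gap in the lower bound.} Your computation of the $ORR$ rates and delay is fine. It is the same count the paper does directly. The gap is in the lower bound, in the case $g\le\phi$. There the two-packet count cannot work for an arbitrary pair of consecutive $\mathcal{T}_1$ activations, because $P$ and $Q$ do not actually compete for the clique links. With $\lambda\to 0$, one activation empties the queue. So once $P$ catches up with $Q$ at some $\mathcal{T}_j$, both are served by the same activations from then on, and no extra slots are charged to either.

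Here is a concrete counterexample. Take $\phi=2$, $|\mathcal{T}|=3$, and the admissible period-$5$ schedule $(\mathcal{T}_1,\mathcal{T}_1,\mathcal{T}_1,\mathcal{T}_2,\mathcal{T}_3)$ in slots $0,\dots,4$, with $a=1$ and $a'=2$. The only packet served at $a$ arrives in slot $1$ and is served in slots $1,3,4$, so its delay is $4$. $P$ arrives in slot $2$ and is served in slots $2,3,4$, so its delay is $3$. Both are below $|\mathcal{T}|+\phi=5$; the bound is attained elsewhere (the packet arriving in slot $3$ has delay $7$). So the inequality you aim for in this case is false. Your fallback induction does not say where the extra slots come from, so it does not close the gap either.

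\paragraph{The missing idea.} Choose the pair so that $P$ never catches up with $Q$. This is the same device as the paper's ``first slot of arrivals that is not caught up,'' which it uses in a contradiction argument on the excess $\Delta(t)$.

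Let $F(t)$ be the slot in which a packet arriving in slot $t$ is served at $\mathcal{T}_m$, where $m=\phi+1$. $F$ is non-decreasing and satisfies $F(t)\ge t$, so $F(t)<F(t+1)$ for some $t$. For such $t$, $\mathcal{T}_1$ must be active in slot $t$; otherwise both packets are first served at the same $\mathcal{T}_1$ activation and move together afterwards.

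Let $q_j$ and $p_j$ be the slots in which the packets arriving at $t$ and $t+1$ are served at $\mathcal{T}_j$. FCFS gives $p_j\ge q_j$. If $p_j=q_j$ for some $j$, then $F(t)=F(t+1)$, so in fact $p_j>q_j$ for all $j$. Hence $q_2,\dots,q_m,p_1,\dots,p_m$ are $2m-1$ distinct slots: equal indices differ by the strict inequality, and distinct indices differ because the clique allows one link per slot. All of them lie in $[t+1,p_m]$.

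So the packet arriving at $t+1$ needs at least $2\phi+1$ slots to clear the clique. Adding the remaining $|\mathcal{T}|-\phi-1$ hops gives at least $|\mathcal{T}|+\phi$ in total. This also makes your case split on $g$ unnecessary.
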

\end{framed}

\begin{proof}
    See Appendix~\ref{app:orrproof}.
\end{proof}

Note that under the $ORR$ policy, the bounds in~\eqref{eq:taulowerbound} are tight because inter-scheduling times are equal for all individual links and link pairs, with $\underline{k}_1 = \overline{k}_1 = \phi$ and $\underline{k}_{j,j+1} = \overline{k}_{j,j+1} = 1$ for all $j$.

From~\eqref{eq:lambdabound}, the maximum rate the $ORR$ policy can support for a set of slices $\boldsymbol{w}$ is
\begin{equation}
    \lambda^*(ORR, \boldsymbol{w}) = \frac{1}{\phi+1} \min_{e \in \mathcal{T}} w_e,
\end{equation}
which is not throughput-optimal in general. From Theorem~\ref{th:thoptsoln}, the difference between this and the throughput-optimal rate is
\begin{multline}\label{eq:orrlambdaloss}
    \lambda^*(\boldsymbol{w}) - \lambda^*(ORR, \boldsymbol{w}) \\= \frac{1}{\phi+1} \big( \min_{1 \leq j \leq |\mathcal{T}|-\phi} H(w_j,\dots,w_{j+\phi}) - \min_{e \in \mathcal{T}} w_e \big),
\end{multline}
where we again slightly abuse notation to denote $\mathcal{T}_j$ as link $j$. We can immediately observe that when all slice widths are equal, this quantity becomes zero.

\begin{framed}
\begin{corollary}\label{cor:jointoptimalpoint}
    When slice widths are equal across a flow's route, the $ORR$ policy is both throughput-optimal and deadline-optimal.
\end{corollary}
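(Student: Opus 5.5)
The argument splits into the two halves of the claim. Deadline optimality of $ORR$ comes straight from Theorem~\ref{th:orrbound}: it holds for every interference model $\phi \in \Phi$ and does not depend on slice widths or arrival rate, because $\tau^*(\pi)$ is defined in the limit $\lambda \to 0$. So only throughput optimality needs an argument. For that I will show that the gap in~\eqref{eq:orrlambdaloss} vanishes when $w_e = w$ for all $e \in \mathcal{T}$, since then $\lambda^*(ORR,\boldsymbol{w}) = \lambda^*(\boldsymbol{w})$. By the definition of $\lambda^*(\boldsymbol{w})$ as the optimum of~\eqref{eq:thoptformulation}, this gives $\lambda^*(ORR,\boldsymbol{w}) \geq \lambda^*(\pi',\boldsymbol{w})$ for every $\pi' \in \Pi_c$.

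The computation is short. The harmonic mean of $\phi+1$ copies of $w$ is $w$, so every window term $H(w_j,\dots,w_{j+\phi})$ equals $w$, and hence
\begin{equation*}
\min_{1 \leq j \leq |\mathcal{T}|-\phi} H(w_j,\dots,w_{j+\phi}) = w = \min_{e\in\mathcal{T}} w_e .
\end{equation*}
Substituting into~\eqref{eq:orrlambdaloss} shows the difference is zero. As a check, I will compute this directly from~\eqref{eq:lambdabound} and Theorem~\ref{th:orrbound}: since $\bar{\mu}_e^{ORR} = 1/(\phi+1)$, we get $\lambda^*(ORR,\boldsymbol{w}) = w/(\phi+1)$. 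This matches Theorem~\ref{th:thoptsoln} evaluated at equal widths.

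The one step that needs care, and the main obstacle, is the range of the window index. Theorem~\ref{th:thoptsoln} takes the minimum over $1 \leq j \leq |\mathcal{T}|-\phi$, and this set is empty when the route has at most $\phi$ links. In that case every pair of links on the route interferes, so at most one link is active per slot. The constraint to use is then $\sum_{l} \bar{\mu}_l \leq 1$ over the whole route, and the optimal rate is $w/|\mathcal{T}|$. Meanwhile $ORR$ with period $\phi+1$ wastes slots and achieves only $w/(\phi+1)$. I would handle this either by interpreting $\phi$ as $\min(\phi, |\mathcal{T}|-1)$, which is consistent with the total-interference convention $\phi = |E|-1$, or by stating the standing assumption $|\mathcal{T}| \geq \phi+1$. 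With that convention fixed, both optimality properties hold at once for the single policy $ORR$, which proves the corollary.
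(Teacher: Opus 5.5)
Your proposal is correct and follows the paper's own argument: deadline-optimality comes from Theorem~\ref{th:orrbound}, and throughput-optimality follows because the gap in~\eqref{eq:orrlambdaloss} vanishes when every harmonic-mean window equals the common width $w = \min_e w_e$. Your caveat about routes with $|\mathcal{T}| \le \phi$ is a legitimate point the paper leaves implicit, and it affects the deadline half (Theorem~\ref{th:orrbound}) as well, since ORR with period $\phi+1$ would then leave slots idle; taking the effective value $\min(\phi,|\mathcal{T}|-1)$ resolves both halves.
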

\end{framed}

Under general slice widths, however,~\eqref{eq:orrlambdaloss} exhibits a tradeoff between supported rates and deadlines. We can achieve deadline optimality with the $ORR$ policy at a cost of sacrificing throughput, and similarly, we can choose an alternative policy to $ORR$ and achieve higher throughput at a cost of sacrificing deadlines. From Corollary~\ref{cor:deadlinecondition}, a policy under the throughput-optimal rate $\lambda^*(\boldsymbol{w})$ meets deadline $\tau$ if and only if
\begin{equation}
    \frac{1}{\lambda^*(\boldsymbol{w})} \max_{0 \leq t \leq T} \sum_{e \in \mathcal{T}} Q_e^{\pi}(t) \leq \tau_i.
\end{equation}
Then the smallest achievable deadline under a throughput-optimal policy is the solution to
\begin{multline}\label{eq:mintau}
    \min_{\pi} \tau^* \big(\pi, \boldsymbol{w}, \lambda^*(\boldsymbol{w}) \big) \\
    \begin{aligned}
        &= \min_{\pi} \frac{1}{\lambda^*(\boldsymbol{w})} \max_{0 \leq t \leq T} \sum_{e \in \mathcal{T}} Q_e^{\pi}(t) \\
        &= \min_{\pi} \frac{\phi+1}{\min_j H(w_j,\dots,w_{j+\phi})} \max_{0 \leq t \leq T} \sum_{e \in \mathcal{T}} Q_e^{\pi}(t) \\
        &= \min_{\pi} \Gamma(\pi,\boldsymbol{w}) \cdot (\phi+1) \cdot |\mathcal{T}|,
    \end{aligned}
\end{multline}
within a value of $1$ because $\tau^*$ must be an integer, and where we define
\begin{equation}
    \Gamma(\pi,\boldsymbol{w}) \triangleq \frac{\max_{0 \leq t \leq T} \frac{1}{|\mathcal{T}|} \sum_{e \in \mathcal{T}} Q_e^{\pi}(t)}{\min_j H(w_j,\dots,w_{j+\phi})}.
\end{equation}
The min is taken over all throughput-optimal $\pi$ which support rate $\lambda^*(\boldsymbol{w})$. Note that this is less than a factor of $\Gamma(\pi,\boldsymbol{w})(\phi+1)$ from $\tau^*(ORR)$.

Minimizing $\Gamma(\pi,\boldsymbol{w})$ over $\pi$, however, involves optimizing over all feasible schedule orders which support a rate $\lambda^*(\boldsymbol{w})$. This is a combinatorial optimization problem with a state space that grows exponentially in $|\mathcal{T}|$. In fact, this characterization shows that solving $\min_{\pi} \tau^*(\pi,\boldsymbol{w},\lambda)$ for \textit{any} rate $\lambda > \lambda^*(ORR,\boldsymbol{w})$ has worst-case exponential complexity. This illustrates the difficulty of minimizing delay even in a line network with a single flow.

\subsection{Bottleneck Slice Widths}

One factor that adds to the complexity of~\eqref{eq:mintau} is that a link can serve a variable number of packets each time it is activated, depending on the current queue size. We can simplify analysis by considering a special case where every link is a bottleneck. For a given $\lambda$ and set of slices $\boldsymbol{w}$, we know that $\bar{\mu}_e \geq \frac{\lambda}{w_e}$ for queues to be stable, and that $\sum_{j=0}^{\phi} \bar{\mu}_{e+j} \leq 1$ for all $0 \leq e < |\mathcal{T}|-\phi$ due to interference. Combining these conditions, the minimum slice widths $\boldsymbol{w'} \leq \boldsymbol{w}$ that support $\lambda$ are the solution to the convex program
\begin{align}
\begin{aligned}
    \min_{\boldsymbol{w'}} \ &\sum_{e \in \mathcal{T}} w'_e \\
    \text{s.t.} \ &\sum_{j=0}^{\phi} \frac{1}{w'_{e+j}} \leq \frac{1}{\lambda}, \ \forall \ 0 \leq e < |\mathcal{T}|-\phi, \\
        &0 \leq w'_{i,e} \leq w_e, \ \forall \ e \in \mathcal{T}.
\end{aligned}
\end{align}

Consider an augmented network with slice widths equal to $\boldsymbol{w'}$. We refer to these as bottleneck slice widths because no slice can be reduced while still supporting the rate $\lambda$. Because each slice $w'_e \leq w_e$, the minimum achievable deadline $\tau^*(\pi, \boldsymbol{w'}, \lambda)$ is an upper bound on the original $\tau^*(\pi, \boldsymbol{w}, \lambda)$ under any policy $\pi$, because no more packets can ever be served with slices $\boldsymbol{w'}$. Thus the number of packets in the network at a given time can only be larger, and because this holds for any $\pi$, it must hold for the deadline-minimizing policy.

Define a greedy scheduling policy under bottleneck slices as follows. At each time $t$, schedule a link $e$ if $Q_e(t) \geq w'_e$, and if more than one such queue exists, schedule the one closest to the destination. If no such queues exist, schedule the source link. This policy is optimal under total interference.

\begin{framed}
\begin{corollary}\label{cor:greedybound}
    Under total interference and bottleneck slice widths $\boldsymbol{w'}$, greedy is deadline-minimizing for any supported rate $\lambda$, with
    \begin{equation}\label{eq:taugreedy}
        \tau^*(greedy,\boldsymbol{w'},\lambda) \lesssim \frac{1}{\lambda} \sum_{e \in \mathcal{T}} w'_e = \sum_{e \in \mathcal{T}} \frac{1}{\bar{\mu}_e^{greedy}}.
    \end{equation}
\end{corollary}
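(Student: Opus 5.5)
Under total interference exactly one link is active per slot, so for bottleneck slices any stable policy must satisfy $\sum_{e\in\mathcal{T}}\bar\mu_e\le 1$ and $\bar\mu_e\ge \lambda/w'_e$. By the definition of $\boldsymbol{w'}$ (the single constraint $\sum_e 1/w'_e\le 1/\lambda$ when $\phi=|\mathcal{T}|-1$, tight at the optimum), these force $\bar\mu_e=\lambda/w'_e$ exactly for every admissible supporting policy. Hence every such policy has the same activation rates. It also serves on average exactly $w'_e$ per activation, so every activation of $e$ serves a full slice in steady state. The plan is to compare policies through the total backlog $\sum_e Q_e(t)$. By Corollary~\ref{cor:deadlinecondition}, minimizing the deadline is equivalent to minimizing $\max_t\sum_e Q_e(t)$ divided by $\lambda$.

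\textbf{Step 1 (lower bound for any policy).} I will argue that every packet of link $e$ is served in a batch of size $w'_e$. Between two consecutive activations of $e$, a full batch of $w'_e$ packets must have accumulated in $Q_e$ before it is served. Using the FCFS ordering, I will track the oldest packet in the system at a time just before the destination link fires. At that moment, each downstream queue must hold enough traffic to fill its next batch, so the total backlog is at least of order $\sum_e w'_e$ minus lower-order terms; this is what the ``$\lesssim$'' absorbs. More concretely, I would show that any stable policy has some slot at which every link holds at least a full slice's worth of content that must pass through it. Summing gives $\max_t\sum_e Q_e(t)\gtrsim\sum_e w'_e$.

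\textbf{Step 2 (greedy achieves this).} Under greedy, a link fires only when $Q_e\ge w'_e$, and priority goes to the link closest to the destination. I will show by induction along the route, from the destination backward, that greedy maintains $Q_e(t)<2w'_e$ and, more sharply, that at most one queue at a time exceeds $w'_e$. Once a downstream queue is full it is drained immediately, so packets never wait for a downstream link while an upstream one holds them. This gives $\sum_e Q_e(t)\le\sum_e w'_e+O(\max_e w'_e)$, matching Step 1 up to lower-order terms. Dividing by $\lambda$ and using $w'_e/\lambda=1/\bar\mu_e^{greedy}$ gives~\eqref{eq:taugreedy}. Combining the two steps yields deadline-minimality.

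\textbf{Main obstacle.} The hard part is making Step 1 rigorous, namely that no reordering can keep the backlog below $\sum_e w'_e$ up to lower-order terms. I would first try an exchange argument. Take any schedule and repeatedly swap adjacent activations that violate the greedy priority, that is, a link activated while a downstream link holds a full slice. I would show that each swap does not increase the peak total backlog, and that the process ends at the greedy schedule.
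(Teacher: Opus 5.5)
Your opening observation matches the paper's starting point: under total interference the bottleneck constraint $\sum_e 1/w'_e=1/\lambda$ is tight, which forces $\bar\mu_e=\lambda/w'_e$. In steady state every activation therefore serves a full slice, and only full queues are ever served. The gap is in the sandwich built on top of this. Step~1 is false: no policy satisfies a universal lower bound $\max_t\sum_e Q_e(t)\gtrsim\sum_e w'_e$. Take $n$ links with equal, large original widths, so $w'_e=n\lambda$ for all $e$. Greedy reduces to $ORR$, and at the start of slot $j\in\{1,\dots,n-1\}$ of each period $Q_1=j\lambda$, $Q_{j+1}=n\lambda$, and every other queue is empty. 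The peak backlog is therefore $(2n-1)\lambda$, i.e.\ $\tau^*=2n-1=|\mathcal{T}|+\phi$, whereas $\sum_e w'_e/\lambda=n^2$. At no slot do two links hold a full slice, and the discrepancy $(n-1)^2\lambda$ is not a lower-order term. The estimate \eqref{eq:taugreedy} is only an upper estimate for greedy (very loose here), not a two-sided characterization of the optimum. So ``combining the two steps'' cannot give optimality. Even a correct sandwich with lower-order slack would only give near-optimality, whereas the corollary asserts exact deadline-minimality.

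The missing idea is that the exchange argument is not a tool for Step~1. It is the entire optimality proof, and it needs no lower bound; this is how the paper argues. Suppose that at slot $t$ the full queue closest to the destination is $e'$, the policy serves some $e\neq e'$, and $t'>t$ is the next activation of $e'$. The queues downstream of $e'$ are not full and receive nothing until $e'$ fires, and only full queues are served, so no link downstream of $e'$ fires in $[t,t')$. Now activate $e'$ at $t$ and delay every activation in $[t,t')$ by one slot. Every activation still serves a full slice, and FCFS means $e'$ serves the same packets. The state at $t'+1$ is identical, and deliveries are unchanged, or strictly earlier if $e'$ is the destination link. Since $\sum_e Q_e(t)$ equals cumulative arrivals minus cumulative deliveries, the backlog is pointwise no larger, and Corollary~\ref{cor:deadlinecondition} finishes the argument. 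A swap of adjacent slots at the violating time does not achieve this in general. You must pull the \emph{next} activation of $e'$ forward past the intervening, necessarily upstream, activations, and the frozen-downstream fact is what makes that move legal. Separately, the per-queue invariants in Step~2 fail, because the source has lowest priority and accumulates during downstream bursts. For $\lambda=1$ and $\boldsymbol{w'}=(2,20,20/9)$, greedy cycles through ten source activations, one of link~2 and nine consecutive ones of link~3. Hence $Q_1$ reaches $10=5w'_1$ while $Q_3>w'_3$, so any estimate for greedy must control the total backlog directly.
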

\end{framed}

\begin{proof}
    See Appendix~\ref{app:greedyboundproof}.
\end{proof}

Greedy can be extended to any value of $\phi$ by again starting at the destination link and moving toward the source, scheduling the first link observed with a full queue. When a link is scheduled, the subsequent $\phi$ links are skipped, and the next link with a full queue is scheduled. This is repeated until the source link is reached. Unfortunately, this is not guaranteed to be optimal because more than one link can be scheduled per slot. Scheduling the greedy links may preclude another link or set of links from being scheduled which would improve future performance. For this reason, it is unlikely that any myopic policy is optimal outside of the total interference model. Nevertheless, greedy still prevents queues from growing too large and prioritizes packets closer to the destination, enabling them to be delivered quickly. As a result, we conjecture that greedy is near-optimal for any $\phi$, and a good heuristic for minimizing $\tau^*$ under bottleneck slice widths.

When slice widths are equal, the greedy policy reduces to $ORR$ as we expect. This follows because if link $\mathcal{T}_j$ is scheduled at time $t$ under greedy, then link $\mathcal{T}_{j+1}$ will be scheduled at $t+1$ because slice widths are equal, so $Q_{j+1}$ must be full. By induction, this policy becomes $ORR$. Intuitively, the more disparate the slice widths are, the more burstiness and delay packets will see, as evidenced by the bound in~\eqref{eq:taugreedy}. A particularly large slice width will lead to a small activation rate and therefore a large backlog under greedy, causing the total delay to increase. 

\section{General Feasibility}\label{sec:generalfeasibility}

We now turn to understanding general feasibility for simultaneous flows with unique source/destination pairs. We again attempt to characterize the feasible guarantees that can be made under any policy in $\Pi_c$, this time applying guarantees to all flows jointly. While slicing decouples the queueing dynamics between flows, we have seen that meeting tight deadlines is dependent on schedule order, which couples flows together through the scheduling policy.

We assume that multiple flows can be routed over the same link, and our policy is able to choose the slice width allocated to each flow, subject to capacity constraints. In particular, we must jointly optimize slice widths and scheduling to satisfy deadline constraints following the condition in Corollary~\ref{cor:deadlinecondition} and capacity constraints following 
\begin{equation}
    \sum_{{f_i: e \in \mathcal{T}^{(i)}}} w_{i,e} \leq c_e, \ \forall \ e \in E.
\end{equation}
Combining this with the stability condition in~\eqref{eq:ratestabcond} yields 
\begin{equation}
    \sum_{{f_i: e \in \mathcal{T}^{(i)}}} \frac{\lambda_i}{\bar{\mu}_e^{\pi}} \leq c_e, \ \forall \ e \in E,
\end{equation}
which is a necessary and sufficient condition for stability under $\pi \in \Pi_c$, by implicitly setting slice widths equal to $w_{i,e} = \lambda_i \ \bar{\mu}_e^{\pi}$. Recall that slices of this size are by definition bottleneck slice widths.

Extending notation from the previous section, we denote $\tau_i^*(\pi, \lambda_i)$ as the minimum achievable deadline, or equivalently the maximum packet delay, for a flow $f_i$ with rate $\lambda_i$ under a policy $\pi$ which chooses both the schedule and slice widths. In the previous section, we saw that finding a deadline-minimizing policy for a solitary flow has complexity that grows exponentially in the number of queues $|\mathcal{T}|$. This remains true in the general case, where the number of queues is $\sum_{f_i \in \mathcal{F}} |\mathcal{T}^{(i)}|$, and the worst-case complexity grows exponentially in this quantity. This problem is intractable for all but the smallest problem sizes. Furthermore, because the direction of traffic is no longer the same for all flows, and interference is no longer limited to a line network, the $ORR$ and greedy policies developed in the previous section are clearly no longer applicable.

There has been considerable work on scheduling policies that ensure queue stability under interference constraints. Before attempting to optimize for deadlines directly, it is instructive to understand how large packet delays can become under a policy which guarantees that queues remain stable and thus bounded. Because queues are bounded, the maximum packet delay is also bounded, and we can derive an exact expression.

\begin{framed}
    \begin{theorem}\label{th:deadlinebound}
        The worst-case packet delay under any stabilizing policy in $\Pi_c$ is given by
        \begin{equation}\label{eq:deadlinebound}
            \max_{\pi \in \Pi_c} \tau_i^*(\pi, \lambda_i) = \max_{\pi \in \Pi_c} K^{\pi} \sum_{e \in \mathcal{T}^{(i)}} (1 - \bar{\mu}_e^{\pi}).
        \end{equation}
        Moreover, there exist many such policies where $\tau_i^*$ grows linearly with $K^{\pi}$.
    \end{theorem}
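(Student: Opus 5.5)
We prove the identity in two parts: first an upper bound on the maximum delay that holds for every stabilizing cyclic policy, then a construction showing the bound is attained.

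\textbf{Upper bound.} Fix a stabilizing $\pi \in \Pi_c$ with period $K^{\pi}$. By Corollary~\ref{cor:ratestability} it uses bottleneck slices $w_{i,e} = \lambda_i/\bar{\mu}_e^{\pi}$, so in steady state each slice receives exactly $\lambda_i K^{\pi}$ packets per period. We then bound the time a tagged packet spends at each hop $e \in \mathcal{T}^{(i)}$. The period contains $\eta_e^{\pi} = \bar{\mu}_e^{\pi} K^{\pi}$ activations. Under FCFS with bottleneck slices, the traffic that arrives at $e$ within one period is fully served by the activations of that period. The step we expect to be the main obstacle is turning this into a delay bound. Let $t$ be the slot in which the packet joins the queue at $e$, and let $t' \le t + K^{\pi}$ be the slot in which it is served. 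We aim to show that the packet's residence time at $e$ consists of at most $K^{\pi}-\eta_e^{\pi}$ slots in which $e$ is not active, plus the active slots that are consumed serving it.

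The intended accounting is to charge each slot of the packet's sojourn either to an inactive slot of $e$ or to the slot in which the packet is finally transmitted. Two technical points must be handled here. The first is avoiding double counting at the handoff, where the slot that serves the packet at hop $j$ is also the slot in which it arrives at hop $j+1$. The second is that the FCFS backlog, by slice-width conservation over one period, never forces the packet to wait through an active slot without being served. The second point is the core issue. We would first try to establish it with a cumulative arrival/service (network-calculus style) argument over one period, using the steady state from Theorem~\ref{th:cyclicschedules}.

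Summing these per-hop bounds over the route gives
\[
\tau_i^*(\pi,\lambda_i) \le \sum_{e \in \mathcal{T}^{(i)}} \bigl(K^{\pi} - \eta_e^{\pi}\bigr) = K^{\pi} \sum_{e \in \mathcal{T}^{(i)}} \bigl(1-\bar{\mu}_e^{\pi}\bigr).
\]
We also check this bound against Corollary~\ref{cor:deadlinecondition}: the total backlog should be at most $\lambda_i$ times the right-hand side.

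\textbf{Tightness.} For any target activation rates, we construct an adversarial ordering within one period. Each link $e_j$ on the route has its $\eta_{e_j}$ activations placed contiguously. The blocks are arranged in reverse route order, so that link $e_{j+1}$'s block ends just before link $e_j$'s block begins, cyclically; this is the same reordering idea as $\pi_2$ in the two-hop example. Consider a packet that arrives at a hop just after the last activation in that link's block. It must wait through all $K^{\pi}-\eta_e$ inactive slots before being served. Because the order is reversed, every downstream hop has just finished its block when the packet arrives there, so the waits add up. A direct count of this schedule then gives equality with the bound. Since this holds for any $K^{\pi}$, we can scale the period while keeping the rates $\bar{\mu}_e$ fixed (for example, by repeating each block proportionally). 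Stability is preserved, and $\tau_i^*$ grows linearly in $K^{\pi}$, which proves the final claim.
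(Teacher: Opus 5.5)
\textbf{The tightness part is fine; the upper bound has a genuine gap.} Your tightness construction is the same as the paper's: contiguous blocks in reverse route order, with the tagged packet arriving just after the source block and then being first in every later block. Your period-scaling argument for linear growth is also fine.

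The upper bound rests on a per-hop claim: with slices $w_{i,e}=\lambda_i/\bar{\mu}_e^{\pi}$, a packet never sits through an active slot of $e$ unserved, so it waits at most $K^{\pi}-\eta_e^{\pi}$ slots at $e$. This claim is false, because the input to a downstream hop is bursty. Take total interference and $K^{\pi}=10$, with a two-hop route:
\begin{itemize}
\item link 1 is active only in slot 1, so $\eta_1=1$ and $w_{i,1}=10\lambda_i$;
\item link 2 is active in slots $0,2,4,6,8$, so $\eta_2=5$ and $w_{i,2}=2\lambda_i$.
\end{itemize}
Link 2's queue empties in slot 0 and then receives a whole period's $10\lambda_i$ packets at once. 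Clearing them takes the activations in slots $2,4,6,8,10$. The last $2\lambda_i$ of the burst therefore sit unserved through four active slots and wait 8 slots at link 2, whereas $K^{\pi}-\eta_2=5$. In general the residence time at a single hop can approach $K^{\pi}$, so summing per-hop bounds gives only about $|\mathcal{T}^{(i)}|K^{\pi}$. The end-to-end bound survives in this example only because those tail packets waited almost nothing at link 1. Delay is shifted between hops, so the inequality is inherently end-to-end and cannot be obtained by summing per-hop bounds.

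\textbf{The missing idea is to pay the burst only once.} Your network-calculus instinct is the right tool, but it must be applied along the whole route rather than used to prove a per-hop property.
\begin{itemize}
\item Any window of $L$ consecutive slots contains at least $\eta_e^{\pi}(L-K^{\pi}+\eta_e^{\pi})/K^{\pi}$ activations of $e$. To see this, write $L=nK^{\pi}+m$; the last $m$ slots miss only $K^{\pi}-m$ slots of a period.
\item Hence, in any backlogged stretch of $L$ slots, hop $e$ serves at least $\lambda_i\bigl(L-(K^{\pi}-\eta_e^{\pi})\bigr)^+$ whenever $w_{i,e}\ge\lambda_i K^{\pi}/\eta_e^{\pi}$.
\item Each hop is therefore a rate-latency server with rate $\lambda_i$ and latency $K^{\pi}-\eta_e^{\pi}$. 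Concatenation gives latency $\sum_e(K^{\pi}-\eta_e^{\pi})$.
\item The horizontal deviation from the arrival curve $\lambda_i L$ is then exactly $K^{\pi}\sum_e(1-\bar{\mu}_e^{\pi})$. This holds up to the one-slot-per-hop transmission convention, which the paper also suppresses.
\end{itemize}
The paper avoids per-hop bounds in a different and less formal way. It argues that reverse-ordered blocks are the worst configuration, and that within a block the first packet is worst off.

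\textbf{Two smaller points.} First, Corollary~\ref{cor:ratestability} gives only $w_{i,e}\ge\lambda_i/\bar{\mu}_e^{\pi}$, not equality. Second, in your block construction you should check that blocks overlapping modulo $K^{\pi}$ (which happens when $\sum_e\eta_e^{\pi}>K^{\pi}$) do not interfere. This holds under $\phi$-hop models, because any $\phi+1$ consecutive links have total activation rate at most $1$.
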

    \end{framed}
    
\begin{proof}
    See Appendix~\ref{app:deadlineboundproof}.
\end{proof}
    
Recall that $\bar{\mu}_e^{\pi} = \lambda_i / w_{i,e} = \eta_e^{\pi} / K^{\pi}$, and after rearranging,
\begin{equation}
    K^{\pi} = \eta_e^{\pi} \frac{w_{i,e}}{\lambda_i}, \ \forall \ i, e.
\end{equation}
In order to be integer, $K^{\pi}$ must be an integer multiple of the denominator of $w_{i,e} / \lambda_i$, for all $i$ and $e$. This is a rational quantity by assumption, but the denominator can grow aribtrarily large under our fluid traffic model. Even when $\lambda_i$ is constrained to be integer and represent a discrete number of packets, $K^{\pi}$ can be as large as $\prod_i \lambda_i$.

This shows that many policies in $\Pi_c$ are insufficient to meet tight deadline guarantees. In fact, it shows that there exists at least one policy where packets experience delays within a constant factor of $K^{\pi} |\mathcal{T}^{(i)}|$, and many policies where packet delays grow with $K^{\pi}$. These are not pathological examples, but rather any policy with inter-scheduling times that are a function of the scheduling period. This includes frame-based policies, like much of the existing QoS scheduling literature, or simply random orderings of activations which happen to have large inter-scheduling times.

To avoid delays that grow with $K^{\pi}$, and motivated by the complexity of searching over all schedule orders, we derive alternative deadline conditions which drastically reduce the state space of the problem. Using a similar argument to that in Lemma~\ref{lemma:taulowerbound}, which gives a bound on packet delays in the solitary flow setting, we will show an upper bound on packet delays in our general case subject to conditions on slice widths and maximum inter-scheduling times.

\subsection{Delay Deficit}

First we introduce a quantity called delay deficit. This quantity tracks how long a packet has been in the network relative to a pre-defined quota, which dictates the delay a packet should ``expect'' to see at each link. Consider a packet $l$ belonging to flow $f_i$, and denote the time that it arrived at its source node as $t_0^{l}$ and the age of this packet at time $t$ as $a^l(t) \triangleq t - t_0^l$. We will show that by setting the delay quota for link $e$ under policy $\pi \in \Pi_c$ to the maximum inter-scheduling time $\overline{k}_e^{\pi}$, the delay deficit of any packet will never grow too large. For ease of notation, we will drop the overline in the quantity $\overline{k}_e^{\pi}$ for the remainder of the paper and refer to this simply as $k_e^{\pi}$.

Define the delay deficit of packet $l$ at time $t$ as 
\begin{equation}\label{eq:delaydeficit}
    \delta^l(t) \triangleq a^l(t) - \sum_{e \in \tilde{T}^{(l)}} k_e^{\pi},
\end{equation}
where $\tilde{T}^{(l)}$ denotes all links on packet $l$'s route where it has already been served. Delay deficit evolves in the following way. When packet $l$ arrives at the source, $\delta^l(t^l_0) = 0$. Then it is incremented by one in each subsequent slot until it reaches its destination, and it is decremented by $k_e^{\pi}$ whenever it is served at link $e$. A negative delay deficit signifies that a packet has seen less delay than its quota on its route thus far, making it ahead of schedule. A delay deficit larger than the delay quota at a packet's current link signifies that the packet is behind schedule.

\begin{framed}
\begin{lemma}\label{lemma:delaydeficit}
    Under any policy $\pi \in \Pi_c$, with maximum inter-scheduling times $k_e^{\pi}$ and slice widths $w_{i,e} \geq \lambda_i k_e^{\pi}$, for all $e \in \mathcal{T}^{(i)}$, the delay deficit of a packet belonging to flow $f_i$ will never be larger than zero when it arrives at a new link.
\end{lemma}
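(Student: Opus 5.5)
We argue by induction along the route $\mathcal{T}^{(i)}$, tracking one packet $l$ hop by hop. The inductive claim is: when $l$ arrives at the $j$-th link $e=\mathcal{T}^{(i)}_j$ at time $s$ (served upstream in slot $s-1$, or having arrived at the source at $s=t_0^l$), we have $\delta^l(s)\le 0$. The base case is immediate, since $\delta^l(t_0^l)=0$ by definition.

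For the inductive step, the key claim is that a packet arriving at link $e$ at time $s$ is served at $e$ within $k_e^{\pi}$ slots, i.e.\ no later than the first activation of $e$ at or after $s$. Granting this, suppose $l$ is served at $e$ at time $s'\le s+k_e^{\pi}$ (up to the $\pm1$ convention for the service slot, which is absorbed by how $k_e^{\pi}$ is defined). Then
\[
\delta^l(s'+1)=\delta^l(s)+(s'+1-s)-k_e^{\pi}\le \delta^l(s)\le 0,
\]
so $l$ arrives at the next link with nonpositive deficit, and the induction closes.

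The real work is proving that each packet is served at the first activation of its link after it arrives. Because each slice is FCFS and work-conserving, it suffices to show that at every activation of $e$ the queue $Q_{i,e}$ holds at most $w_{i,e}$ units, so it is emptied completely. We prove this by a strong induction over time, simultaneously for all links of the flow. Consider an activation of $e$ at time $t_2$, and let $t_1$ be the previous activation, so $t_2-t_1\le k_e^{\pi}$. By the inductive hypothesis in time, the queue was emptied at $t_1$. Hence $Q_{i,e}(t_2)$ consists only of traffic that entered $e$ in the window $(t_1,t_2]$.

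The main obstacle is bounding the traffic entering $e$ in this window. An upstream link can release a burst, namely everything it accumulated over its own gap, and that burst can exceed $\lambda_i(t_2-t_1)$. We use FCFS and fluid ordering: packets exit each link in arrival-time order. So the traffic entering $e$ during $(t_1,t_2]$ is exactly the traffic whose source arrival times lie in some interval. We bound the length of that interval by $k_e^{\pi}$ using the deficit invariant itself. A packet reaching $e$ has already had its upstream delay budget charged against it, while its realised upstream delay is at least $\sum k$ minus its deficit. We therefore show that packets entering $e$ at times in $(t_1,t_2]$ are exactly those whose source arrivals fall between two consecutive "service cutoffs." These cutoffs are determined by the ages of the packets at the moments they were served upstream.

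If this interval argument becomes awkward, a cleaner fallback is available. Inductively, each upstream link $e^-$ empties at every activation, so the traffic entering $e$ during $(t_1,t_2]$ equals the source arrivals in a window delimited by consecutive service instants along the chain. We then try to bound that window length by $k_e^{\pi}$. This is the delicate step, and we expect it to need the cyclic steady-state structure from Theorem~\ref{th:cyclicschedules}. Once the window is bounded, the inflow is at most $\lambda_i k_e^{\pi}\le w_{i,e}$, the queue empties at $t_2$, and both inductions close simultaneously, yielding the lemma.
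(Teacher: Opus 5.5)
Your reduction rests on a claim that is false under the lemma's hypotheses. You argue that $w_{i,e}\ge\lambda_i k_e^{\pi}$ forces $Q_{i,e}$ to hold at most $w_{i,e}$ at every activation, so each packet is served at the first activation of its link after it arrives. The paper explicitly remarks that emptying queues at every activation would require larger slices, and that a packet can stay in $Q_{i,1}$ longer than $k_1$ slots.

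A two-hop example with route $e_1,e_2$ shows the failure. Take $\lambda_i=1$, total interference, period $8$, link $e_1$ active in slots $\{1,3,6\}$ and link $e_2$ active in slots $\{0,4\}$. Then $(k^{\pi}_{e_1},k^{\pi}_{e_2})=(3,4)$, and slices $(3,4)$ meet the hypothesis with equality. In steady state, $e_1$ forwards $3$ packets in slot $1$ (its preceding gap was $3$) and $2$ packets in slot $3$. So $e_2$, which was emptied in slot $0$, holds $5>w_{i,e_2}$ packets in slot $4$ and leaves one behind until slot $8$.

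This breaks three parts of your argument:
\begin{itemize}
\item The implication driving your induction over time (emptied at $t_1$ implies at most $w_{i,e}$ at $t_2$) is false.
\item Your fallback also fails. The source-arrival window feeding one gap of $e$ can be longer than $k_e^{\pi}$, because an upstream link can release a long-gap burst and then be scheduled again soon after.
\item The per-hop inequality $\delta^l(s'+1)\le\delta^l(s)$ fails too. The leftover packet's deficit strictly increases across $e_2$; the lemma survives only because that packet entered $e_2$ with negative deficit.
\end{itemize}
Appealing to the cyclic steady state does not help, since the example is already periodic.

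The missing idea is to stop bounding total inflow to $e$ and instead count only the packets in $Q_{i,e}$ with strictly positive deficit, as the paper does.
\begin{itemize}
\item Every packet in $Q_{i,e}$ has been charged the same upstream quota $\sum_{e'\in\tilde{T}^{(l)}}k_{e'}^{\pi}$. So its deficit is positive exactly when its age exceeds that common sum.
\item The source receives $\lambda_i$ per slot, so at most $\lambda_i$ packets turn positive per slot. Since order is preserved along FCFS slices, these packets sit at the head of the queue.
\item By the induction along the route, packets enter $Q_{i,e}$ with nonpositive deficit. Now induct over activations of $e$: if all positive packets are cleared at one activation, at most $\lambda_i k_e^{\pi}\le w_{i,e}$ are positive at the next one, and all of them are served.
\end{itemize}
Hence every packet is served no later than the first activation after its deficit becomes positive, i.e.\ with deficit at most $k_e^{\pi}$. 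The decrement by $k_e^{\pi}$ then gives a nonpositive deficit at the next link. Upstream bursts are harmless in this accounting: their packets arrive early, with negative deficit, and become due only at rate $\lambda_i$.
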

\end{framed}

\begin{proof}
    See Appendix~\ref{app:delaydeficit}.
\end{proof}

It is important to note that this result and the subsequent theorem hold even though a packet is not necessarily served at link $e$ within $k_e^{\pi}$ slots of arrival. This would require larger slice widths to guarantee that queues are emptied each time a link is scheduled. Rather, our choice of slice widths ensures that a packet is scheduled within $k_e^{\pi}$ slots of when its delay deficit becomes positive.

This is demonstrated by the example in Figure~\ref{fig:delay-deficit}, which shows the first two queues of a flow $f_i$'s route and the delay deficit of one packet. For simplicity we let $\lambda_i = 1$ and assume that $k_0 = 3$ and $k_1 = 4$, noting that the schedule shown in the figure satisfies these maximum inter-scheduling times. Let slice widths $w_{i,0} = 3$ and $w_{i,1} = 4$, which satisfy the condition in Lemma~\ref{lemma:delaydeficit}. The figure shows the queue position and delay deficit of the packet marked with an X. Note that the packet remains in $Q_{i,1}$ for longer than $k_1$ slots, but that its delay deficit is still negative when it arrives at the next link. This is possible because the packet only spends one slot in $Q_{i,0}$, so it arrives at the second link ``early'' with a negative delay deficit.

An immediate consequence of Lemma~\ref{lemma:delaydeficit} is the following end-to-end delay bound, which will form the basis for our scheduling policies in the next section.

\begin{figure}
    \centering
    \includegraphics[width=0.4\textwidth]{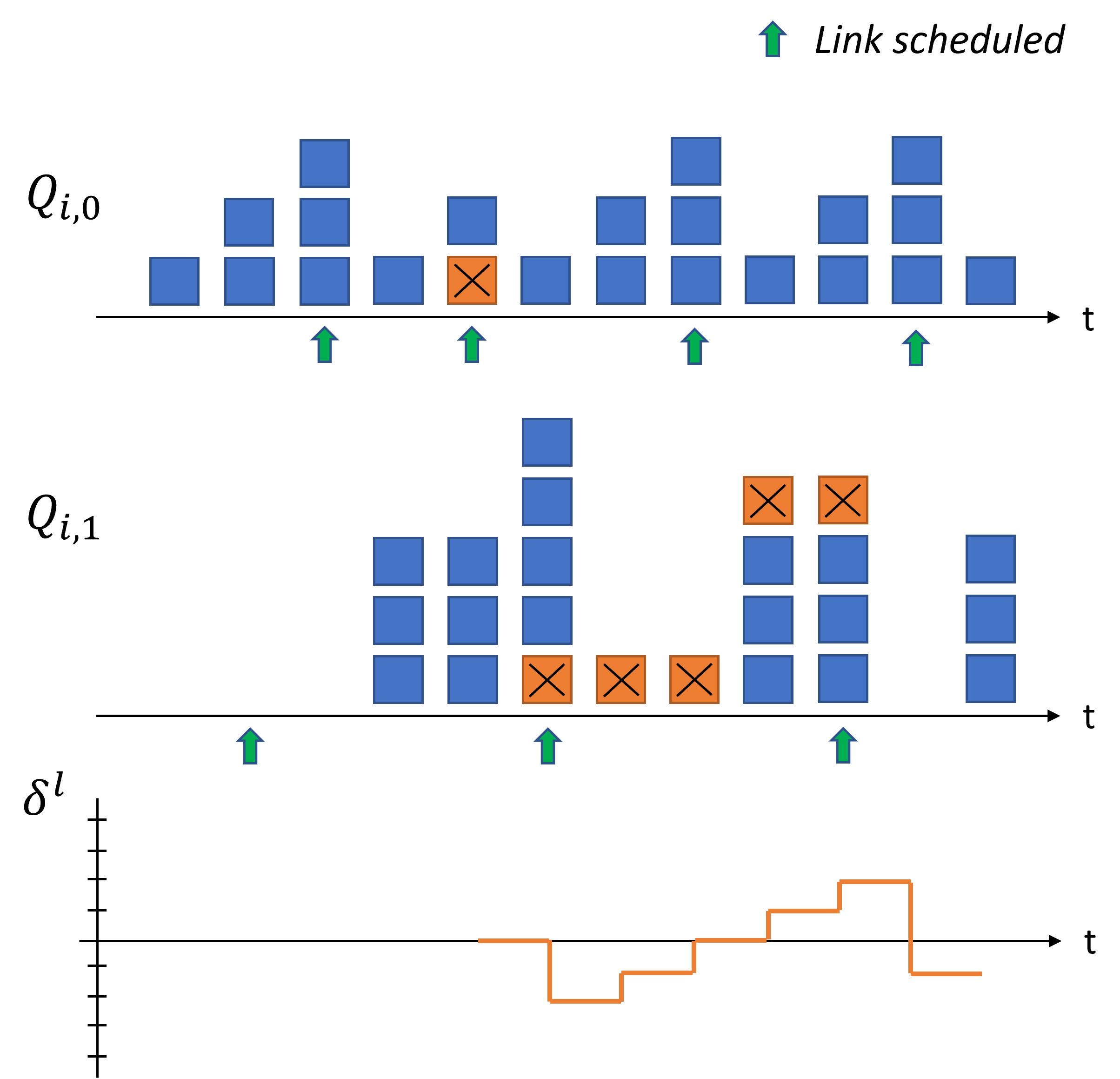}
    \caption{Queue Size and Delay Deficit Evolution}
    \label{fig:delay-deficit}
\end{figure}

\begin{framed}
\begin{theorem}\label{th:maindelaydefbound}
    Under any policy $\pi \in \Pi_c$, with maximum inter-scheduling times $k_e^{\pi}$ and slice widths $w_{i,e} \geq \lambda_i k_e^{\pi}$, for all $e \in \mathcal{T}^{(i)}$,
    \begin{equation}\label{eq:tauupperbound}
        |\mathcal{T}^{(i)}| \leq \tau_i^*(\pi,\lambda_i) \leq \sum_{e \in \mathcal{T}^{(i)}} k_e^{\pi}.
    \end{equation}
\end{theorem}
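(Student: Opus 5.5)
The plan is to prove the two inequalities in \eqref{eq:tauupperbound} separately; the upper bound will come from Lemma~\ref{lemma:delaydeficit} applied at the last hop.

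\textbf{Lower bound.} This is elementary. A packet of $f_i$ must be served once on each of the $|\mathcal{T}^{(i)}|$ links of its route, in order. Each service takes a full slot, and a packet served at link $\mathcal{T}^{(i)}_j$ in slot $t$ can be served at $\mathcal{T}^{(i)}_{j+1}$ no earlier than slot $t+1$. It therefore takes at least one slot per hop, so every packet's delay is at least $|\mathcal{T}^{(i)}|$, and hence so is the maximum delay $\tau_i^*(\pi,\lambda_i)$. This agrees with the lower bound of Lemma~\ref{lemma:taulowerbound}, since every inter-scheduling time is at least $1$.

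\textbf{Upper bound.} Fix a packet $l$ of $f_i$ and let $e^\dagger$ denote the final link of $\mathcal{T}^{(i)}$.
\begin{enumerate}
\item By Lemma~\ref{lemma:delaydeficit}, under the hypotheses $w_{i,e}\ge \lambda_i k_e^{\pi}$, the packet reaches $e^\dagger$ with delay deficit at most zero. The same holds at every intermediate link.
\item I will argue that the packet is served at $e^\dagger$ by the time its deficit reaches $k_{e^\dagger}^{\pi}$. That is, its deficit just before service is at most $k_{e^\dagger}^{\pi}$, so just after service it is at most zero. This is exactly the property the proof of Lemma~\ref{lemma:delaydeficit} must establish for each hop, namely that a packet is scheduled within $k_e^{\pi}$ slots of its deficit becoming positive. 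I will extract that step from the appendix argument.
\item Once the packet has been served at every link, $\delta^l(t)\le 0$ gives
\[
a^l(t) \le \sum_{e\in\mathcal{T}^{(i)}} k_e^{\pi},
\]
which is the claimed bound. Taking the maximum over packets yields $\tau_i^*(\pi,\lambda_i)\le\sum_{e} k_e^{\pi}$.
\end{enumerate}

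\textbf{Main obstacle.} The only delicate point is step 2, showing that service at the last link also restores a nonpositive deficit, since the lemma as stated concerns arrival at a \emph{new} link. I would handle this by appending a fictitious link after the destination. Alternatively, I would rerun the lemma's argument directly, as follows.
\begin{itemize}
\item Since $w_{i,e}\ge\lambda_i k_e^{\pi}$, all packets ahead of $l$ in the queue of $e^\dagger$ whose deficits are positive arrived at the source within a window of length at most $k_{e^\dagger}^{\pi}$.
\item Their total volume is therefore at most $\lambda_i k_{e^\dagger}^{\pi}\le w_{i,e^\dagger}$.
\item The FCFS queue is served in order, and $e^\dagger$ is activated at least once in any $k_{e^\dagger}^{\pi}$ consecutive slots. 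Hence $l$ is served within $k_{e^\dagger}^{\pi}$ slots after its deficit becomes positive.
\end{itemize}
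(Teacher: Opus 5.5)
Your proposal is correct and follows the paper's proof. The lower bound is the route length, and the upper bound comes from appending a fictitious exit link after the destination, so that Lemma~\ref{lemma:delaydeficit} gives a nonpositive deficit at delivery and hence $a^l(t)\le\sum_{e\in\mathcal{T}^{(i)}}k_e^{\pi}$. Your fallback of rerunning the lemma's argument at the last hop is not needed; if you did use it, note that the window bound on positive-deficit packets comes from the inductive claim that each activation clears the positive-deficit backlog, and the slice-width condition is only what guarantees that backlog fits in one service.
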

\end{framed}

\begin{proof}
    The lower bound is simply the length of the route, so it holds trivially. Now we will show the upper bound. Imagine a ficticious exit link leaving the destination node of $f_i$. From Lemma~\ref{lemma:delaydeficit}, the delay deficit of a packet which reaches this link is at most zero, so from~\eqref{eq:delaydeficit} the packet's age is at most $\sum_{e \in \mathcal{T}^{(i)}} k_e^{\pi}$. This completes the proof.
\end{proof}

We note several things about this result. First, it subsumes the worst-case delay bound in Theorem~\ref{th:deadlinebound}. In particular, in the proof of Theorem~\ref{th:deadlinebound}, the worst-case scenario is shown to have a maximum inter-scheduling time $k_e^{\pi} = K^{\pi}(1-\bar{\mu}_e^{\pi})$ for all links $e$, which makes the bounds identical under this policy. Second, when links are scheduled more regularly and $k_e^{\pi}$ is independent of the schedule length, this bound can be significantly tighter than that in Theorem~\ref{th:deadlinebound} and is within a factor of the average $k_e^{\pi}$ from the lower bound.

Intuitively, bounding the inter-scheduling times for each link bounds how far a policy can deviate from the $ORR$ policy for each flow. The $ORR$ policy ensures that once packets are served at a link, they are immediately served at the subsequent link in the next slot. While this is no longer feasible in the general case, the delay deficit conditions show that no packet can wait too long to be served when inter-scheduling times are small, bounding the deviation from the $ORR$ policy.

A final thing to note is that, while the necessary conditions on slice widths are larger than their bottleneck values, the amount of additional slice capacity required is generally small. Define this additional capacity for slice $w_{i,e}$ as 
\begin{equation}\label{eq:deltaw}
    \Delta w_{i,e}(\pi) \triangleq \lambda_i \Bigl ( k_e^{\pi} - \frac{1}{\bar{\mu}_e^{\pi}} \Bigr ) \geq 0.
\end{equation}
Because the average inter-scheduling time is $1/\bar{\mu}_e^{\pi}$, $\Delta w_{i,e}(\pi)$ is small when inter-scheduling times are somewhat regular and the maximum is not much larger than the average. If inter-scheduling times are always equal, then $k_e^{\pi} = 1 / \bar{\mu}_e^{\pi}$ for all $e \in E$, and $\Delta w_{i,e}(\pi) = 0$. When this holds we say that $\pi$ is a \textit{regular schedule}.

\begin{framed}
\begin{corollary}\label{cor:regularschedules}
    A policy $\pi \in \Pi_c$ with maximum inter-scheduling times $\boldsymbol{k^{\pi}}$ supports a set of flows $\mathcal{F}$ if it is a solution to
    \begin{align}\label{eq:rateconditions}
    \begin{aligned}
       (P1) \ : \ \text{find} \ &\pi \\
    \text{s.t.} \ &\sum_{e \in \mathcal{T}^{(i)}} k_e^{\pi} \leq \tau_i, \ \forall \ f_i \in \mathcal{F}, \\
    &\sum_{f_i : e \in \mathcal{T}^{(i)}} w_{i,e} \leq c_e, \ \forall \ e \in E, \\
    &w_{i,e} = \lambda_i k_e^{\pi}, \ \forall \ f_i \in \mathcal{F}, \ e \in \mathcal{T}^{(i)}, \\
    &\mu^{\pi}(t) \in \mathcal{M}, \ \forall \ 0 \leq t < K^{\pi}.
    \end{aligned}
    \end{align}
    Furthermore, if $\pi$ is a regular schedule, then all slices are equal to their bottleneck values.
\end{corollary}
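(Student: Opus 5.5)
The plan is to reduce the claim to Theorem~\ref{th:maindelaydefbound} together with Definition~\ref{def:supportingpolicy}, and then treat the regular-schedule statement as a direct calculation.

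\textbf{Step 1: every packet meets its deadline.} Suppose $\pi$ satisfies the constraints of (P1). The slice widths are $w_{i,e} = \lambda_i k_e^{\pi}$, so the hypothesis $w_{i,e} \geq \lambda_i k_e^{\pi}$ of Lemma~\ref{lemma:delaydeficit} and Theorem~\ref{th:maindelaydefbound} holds with equality on every link of every route. Theorem~\ref{th:maindelaydefbound} then gives $\tau_i^*(\pi,\lambda_i) \leq \sum_{e \in \mathcal{T}^{(i)}} k_e^{\pi}$. The first constraint of (P1) bounds this sum by $\tau_i$. Hence no packet of $f_i$ is delayed more than $\tau_i$ slots.

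\textbf{Step 2: admissibility of the slices and schedule.} The capacity constraint makes the slices a valid allocation on every link. The last constraint puts every activation set in $\mathcal{M}$, so $\pi$ is admissible. Because $\pi$ is cyclic, it lies in $\Pi_c$. The bound from Step~1 does not depend on $T$, so every packet reaches its destination without expiring over any finite horizon. By Definition~\ref{def:supportingpolicy}, $\pi$ supports $\mathcal{F}$.

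\textbf{Step 3: stability.} It is worth confirming separately that queues are stable. Over one period, the $\eta_e^{\pi}$ gaps between consecutive activations of $e$ sum to $K^{\pi}$, and each gap is at most $k_e^{\pi}$. Hence $1/\bar{\mu}_e^{\pi} = K^{\pi}/\eta_e^{\pi} \leq k_e^{\pi}$, so $w_{i,e} \geq \lambda_i/\bar{\mu}_e^{\pi}$, which is \eqref{eq:ratestabcond} of Corollary~\ref{cor:ratestability}. This is also what makes $\Delta w_{i,e}(\pi) \geq 0$ in \eqref{eq:deltaw}.

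\textbf{Step 4: the regular-schedule claim.} If $\pi$ is regular, all gaps between consecutive activations of $e$ are equal, so $k_e^{\pi} = 1/\bar{\mu}_e^{\pi}$. Then $w_{i,e} = \lambda_i/\bar{\mu}_e^{\pi}$ and $\Delta w_{i,e}(\pi) = 0$. These are exactly the bottleneck slice widths identified in Section~\ref{sec:generalfeasibility}, namely the smallest widths satisfying \eqref{eq:ratestabcond}. No slice can be reduced without violating the stability condition on that link, which completes the statement.

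\textbf{Main obstacle.} I expect the only point needing care to be the definitional match in Step~4: that ``bottleneck'' here means the minimal width consistent with stability given the activation rates of $\pi$, rather than the solution of the line-network convex program. I would note that in the multi-flow setting the paper explicitly identifies $\lambda_i/\bar{\mu}_e^{\pi}$ as the bottleneck width, so the claim is immediate under that reading. Everything else follows by direct invocation of Theorem~\ref{th:maindelaydefbound}.
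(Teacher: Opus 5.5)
Your proposal is correct and follows the paper's argument. The paper likewise gets support directly from the upper bound in Theorem~\ref{th:maindelaydefbound}, whose slice-width hypothesis holds with equality under (P1), together with the capacity and interference constraints; the regular-schedule claim is the observation that $k_e^{\pi} = 1/\bar{\mu}_e^{\pi}$ forces $\Delta w_{i,e}(\pi) = 0$ in \eqref{eq:deltaw}, and your Step~3 stability check is correct but redundant, since bounded delay already implies bounded queues.
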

\end{framed}

\begin{proof}
    The sufficient conditions come from the deadline bound in Theorem~\ref{th:maindelaydefbound} and link capacity constraints. If both of these are satisfied for all flows and links, then $\pi$ is guaranteed to support $\mathcal{F}$ by Theorem~\ref{th:maindelaydefbound}.
\end{proof}

The conditions in Corollary~\ref{cor:regularschedules} are sufficient but not necessary for a policy to support $\mathcal{F}$. It is interesting to note, however, that after combining constraints, the deadline bound becomes
\begin{equation}
    \sum_{e \in \mathcal{T}^{(i)}} k_e = \frac{1}{\lambda_i} \sum_{e \in \mathcal{T}^{(i)}} w_{i,e} \leq \tau_i,
\end{equation}
which is equal to the approximate bound on $\tau^*$ under the greedy policy for a solitary flow in~\eqref{eq:taugreedy}. Recall that greedy is optimal under total interference, and we conjecture that it is near-optimal under more relaxed interference constraints. This leads us to conclude that the conditions in Corollary~\ref{cor:regularschedules} are close to necessary and further motivates our approach.

\subsection{Pinwheel Scheduling}

The conditions in Corollary~\ref{cor:regularschedules} present a massive reduction in complexity from tracking the age of each individual packet to ensure its deadline is met. Nevertheless, finding a schedule which satisfies these conditions is still a non-trivial task. In addition to finding a vector of integers $\boldsymbol{k^{\pi}}$ which satisfies constraints, we must also construct a feasible schedule that satisfies these maximum inter-scheduling times. In particular, there must exist a schedule of links such that no two consecutive appearances of link $e$ in the schedule are farther than $k_e^{\pi}$ slots apart. This type of scheduling has been studied under the name pinwheel scheduling~\cite{holte1989pinwheel,holte1992pinwheel} and periodic maintenance scheduling~\cite{wei_periodic_1983} with somewhat limited success. Under a total interference model where only one link can be scheduled at a time, determining whether a schedule exists for a given vector $\boldsymbol{k^{\pi}}$ is known to be NP-hard~\cite{mok_algorithms_1989,bar-noy_minimizing_2002}.

To illustrate the complexity of the problem, we assume a total interference model and a fixed vector $\boldsymbol{k}$ that satisfies~\eqref{eq:rateconditions} (for ease of notation, we drop the superscript $\pi$). Following the nomenclature in the literature, define
\begin{equation}
    \rho(\boldsymbol{k}) \triangleq \sum_{e \in E} \frac{1}{k_e}
\end{equation}
as the \textit{density} of a scheduling vector $\boldsymbol{k}$. Because $k_e$ represents the maximum inter-scheduling time, it is larger than the average inter-scheduling time $1/\bar{\mu}_e^{\pi}$. Therefore, $\rho(\boldsymbol{k}) \leq \sum_{e \in E} \bar{\mu}_e$, and a necessary condition for feasibility is that $\rho(\boldsymbol{k}) \leq 1$. This is trivially satisfied in some cases like round-robin. The vector $\boldsymbol{k} = (3,3,3)$ for example has density $1$ and is satisfied by the schedule $\{0,1,2\}$, where recall that all schedules are by definition cyclic and repeating. The vector $\boldsymbol{k} = (2,4,4)$ likewise has density $1$ and is schedulable with $\{0,1,0,2\}$.

The density condition is far from sufficient, however. Consider a vector $\boldsymbol{k} = (2,3,x)$. It is easy to verify that no matter how large a value we assign to $x$, no satisfying schedule exists for this vector. Because this vector has a density of $5/6 + 1/x$, there is no guarantee that a schedule exists for any vector with density larger than $5/6$. It has long been conjectured that all vectors with density up to $5/6$ are schedulable, and this was recently proven to be true~\cite{kawamura_proof_2024}, though scheduling all such vectors in polynomial time remains elusive.

Furthermore, density is not the only factor that affects schedulability. Consider the vector $\boldsymbol{k} = (3,4,8,8,8)$ with density $0.96$. This is satisfied by the schedule $\{0,1,2,0,3,1,0,4\}$. Now consider the vector $\boldsymbol{k} = (3,5,7,8,8)$, with density $0.93$. While this vector appears similar and in fact has a lower density, there is no way to construct a satisfying schedule, and enumerating over all possible schedules in an effort to do so is intractable for all but the shortest schedule lengths.

There are certain properties which make vectors easy to schedule, however, and algorithms exist to schedule these vectors in polynomial time. These include all vectors with density up to $1$ that contain at most two distinct values~\cite{holte1992pinwheel}, such as $\boldsymbol{k} = (4,4,6,6,6)$, and all vectors with density up to $1$ where each value in the sorted vector is a multiple of the previous~\cite{holte1989pinwheel,chan_schedulers_1993}, such as $\boldsymbol{k} = (2,6,6,12,12)$. We refer to the second class of vectors as step-down vectors following~\cite{li_scheduling_2021}.

There are also a number of algorithms which apply rounding techniques to force a vector to satisfy one of the above conditions. Because values can always be rounded down without violating the inter-scheduling times in the original vector, rounding values down until the resulting vector is step-down, for example, is a valid technique. The most comprehensive of these algorithms is known as \textit{Scheduler xy}, or $S_{xy}$, and can schedule all vectors with density up to $0.7$~\cite{chan_general_1992}, plus a subset of vectors with densities larger than this value (including the three cases above). This is the closest density guarantee to the $5/6$ bound of any known algorithm.

To further complicate things, the discussion thus far only applies to the total interference model, where every link is mutually interfering. Under more relaxed interference models, multiple links can be scheduled per slot, which creates a more general version of the problem. In this generalized version, a schedule must simultaneously satisfy both inter-scheduling times and interference constraints. We refer to the joint problem of forming independent sets and scheduling them under pinwheel constraints as the \textit{pinwheel coloring} problem.

\begin{framed}
\begin{theorem}
    $(P1)$ generalizes the pinwheel coloring problem, which is NP-hard.
\end{theorem}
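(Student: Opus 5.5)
The plan is a chain of two polynomial-time reductions. First, pinwheel scheduling (total interference) reduces to pinwheel coloring. Second, pinwheel coloring reduces to $(P1)$, in the sense that any pinwheel coloring instance can be encoded as an instance of $(P1)$ whose feasible solutions are exactly the schedules solving the coloring instance. NP-hardness of pinwheel coloring then follows from the NP-hardness of pinwheel scheduling~\cite{mok_algorithms_1989,bar-noy_minimizing_2002}. Because the second reduction embeds every coloring instance into $(P1)$, it also shows that $(P1)$ generalizes pinwheel coloring and is NP-hard.

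We first fix the definition of pinwheel coloring. An instance consists of a conflict graph $G_c$ on a vertex set $E$ together with integers $\boldsymbol{k} = (k_e)_{e \in E}$. The question is whether there exists a cyclic sequence of independent sets $\mu(t) \in \mathcal{M}$ such that every $e$ appears in some $\mu(t)$ at least once in every window of $k_e$ consecutive slots. When $G_c$ is complete, $\mathcal{M}$ consists only of singletons, and the problem is exactly classical pinwheel scheduling. Pinwheel scheduling is therefore a special case of pinwheel coloring, which gives NP-hardness of the latter immediately.

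For the embedding into $(P1)$, take a pinwheel coloring instance $(G_c, \boldsymbol{k})$ and build a network with one link per vertex $e$, and with interference given by $G_c$. Any conflict graph is allowed under the paper's general interference model. For each link $e$, create a single one-hop flow $f_e$ with route $\mathcal{T}^{(e)} = \{e\}$, deadline $\tau_e = k_e$, and rate $\lambda_e = 1$. Set the link capacity $c_e = k_e$. With these choices the constraints of $(P1)$ reduce as follows:
\begin{itemize}
\item the deadline constraint becomes $k_e^{\pi} \le k_e$;
\item the slice constraint forces $w_{e,e} = k_e^{\pi}$;
\item the capacity constraint becomes $k_e^{\pi} \le c_e = k_e$, which repeats the deadline constraint;
\item the last constraint requires $\mu^{\pi}(t) \in \mathcal{M}$.
\end{itemize}
So $\pi$ is feasible for $(P1)$ exactly when it is a cyclic schedule of independent sets whose maximum inter-scheduling time for each link is at most $k_e$, which is precisely a pinwheel coloring solution. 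The construction is clearly polynomial in the instance size. For hardness, one may apply it directly to complete conflict graphs, i.e.\ under total interference.

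The main obstacle I expect is matching definitions precisely. The maximum inter-scheduling time $\overline{k}_e^{\pi}$ is defined via differences taken $\bmod K^{\pi}$ within a cyclic schedule. We need to check that ``$\overline{k}_e^{\pi} \le k_e$'' coincides with the pinwheel requirement that each window of $k_e$ consecutive slots contains $e$. We also need to confirm that restricting to cyclic schedules loses nothing. For the latter I would invoke the standard fact that any feasible pinwheel schedule can be taken periodic, by a finite-state pigeonhole argument on the vector of time-since-last-service, in the same spirit as Theorem~\ref{th:cyclicschedules}. A second detail is that $(P1)$ is a feasibility problem, so the hardness claim concerns its decision version; this is routine to state.
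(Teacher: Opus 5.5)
Your proposal is correct and is essentially the paper's argument: one-hop routes whose deadlines act as the pinwheel bounds (the paper takes capacities infinite rather than $c_e = k_e$), so that under total interference $(P1)$ is exactly pinwheel scheduling. Your embedding of an arbitrary conflict graph also makes the ``generalizes pinwheel coloring'' part more explicit than the paper does.

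The paper additionally treats a second special case: all deadlines equal to $\tau$ on a general conflict graph. There, feasibility is equivalent to $\tau$-colorability of $G_c$. One direction is round-robin over the color classes; for the other, assign each link to its first active slot within a fixed window of $\tau$ consecutive slots. You may want to add this case, because it grounds NP-hardness in vertex coloring, which is textbook NP-complete already for $\tau = 3$. Without it, the hardness rests solely on the pinwheel-scheduling citation.
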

\end{framed}

\begin{proof}
    We will show that $(P1)$ generalizes both pinwheel scheduling and graph coloring. Assume that each route has a length of one link so deadline constraints reduce to an upper bound on each $k_e$, and that capacities are infinite. Then, under a total interference model, the problem reduces to finding a schedule such that the inter-scheduling time of each task is bounded by some quantity, which is precisely the pinwheel scheduling problem.

    Now assume that all values of $\tau_i$ are equal, but the conflict graph allows more than one link to be scheduled at once. Then deciding whether every link can meet its inter-scheduling time is equivalent to deciding whether the conflict graph can be colored with $\tau_i$ colors, which is the vertex coloring problem. Both pinwheel scheduling and vertex coloring are NP-complete, so $(P1)$ is NP-hard.
\end{proof}

To generate schedules efficiently in light of this result, we develop an algorithm in the next section which leverages the sets of vectors $\boldsymbol{k}$ that can be scheduled in polynomial time, while efficiently handling interference constraints. We demonstrate the algorithm's performance through extensive simulations in Section~\ref{sec:simulations}.

\section{Algorithm Design}\label{sec:algorithmdesign}

In this section we design a polynomial-time algorithm which is able to solve many instances of $(P1)$ and the Pinwheel Coloring problem. Because the problem is NP-hard, it cannot find a supporting policy in all cases where one exists, but simulation results in the next section show that it can simultaneously achieve near-optimal throughput and tight deadlines in a variety of network scenarios.

Our algorithm consists of two stages, and for ease of exposition we present them using the conflict graph representation of the network, where each node $v \in V_c$ corresponds to a link in the network graph, and edges between nodes represent interference constraints. In the first stage, we construct activation sets which form a coloring of the conflict graph, so each node $v \in V_c$ belongs to exactly one set. At each time step, we will activate one of these sets. Denote the set of activation sets after coloring as $\mathcal{S}$, and the maximum inter-scheduling time of set $s \in \mathcal{S}$ as $k_s$. Then the maximum inter-scheduling time of a node $v$ in the conflict graph is equal to $k_s$ when $v \in s$.

After forming $\mathcal{S}$, we set $k_s$ such that it satisfies the conditions in~\eqref{eq:rateconditions} for all $v \in s$, and denote the vector of inter-scheduling times $k_s$ as $\boldsymbol{k_{\mathcal{S}}}$. Because vectors with lower density are generally easier to schedule, our objective in the first stage of the algorithm is to find a coloring $\mathcal{S}$ and corresponding $\boldsymbol{k_{\mathcal{S}}}$ that minimizes $\rho(\boldsymbol{k_{\mathcal{S}}})$. Explicitly, this is a solution to
\begin{align}
\begin{aligned}
    (P2) \ : \ \min_{\mathcal{S}, \boldsymbol{k_{\mathcal{S}}}} \ &\sum_{s \in \mathcal{S}} \frac{1}{k_s} \\
    \text{s.t.} \ &k_v = k_s, \ \forall \ v \in s, s \in \mathcal{S}, \\
    &\sum_{v \in \mathcal{T}^{(i)}} k_v \leq \tau_i, \ \forall \ f_i \in \mathcal{F}_d, \\
    &k_v \leq \frac{c_v}{\lambda_v} , \ \forall \ v \in V_c, \\
    &k_v \in \mathbb{Z}_+, \ \forall \ v \in V_c,
\end{aligned}
\end{align}
where we minimize $\mathcal{S}$ over the set of all valid colorings of $V_c$, and where $\lambda_v \triangleq \sum_ {f_i : v \in \mathcal{T}^{(i)}} \lambda_i$ is the total traffic routed over $v$. 

After solving $(P2)$, the second stage of the algorithm attempts to form a satisfying schedule using the $S_{xy}$ pinwheel scheduling algorithm. If the algorithm finds a satisfying schedule then it is guaranteed to support all flows in $\mathcal{F}$ from Corollary~\ref{cor:regularschedules}.

\subsection{Weighted Greedy Coloring}

We begin by addressing how to solve $(P2)$. This problem is NP-hard because if all values in $\boldsymbol{k_{\mathcal{S}}}$ are equal, it reduces to finding a minimum vertex coloring, which is a well known NP-complete problem. To solve it efficiently, we devise a heuristic based on a greedy coloring algorithm.

Let the vertices $V_c$ be ordered in some manner which we will fix later, and assume we have a set of colors $\{0,1,\dots\}$. For each vertex, the algorithm assigns it the smallest feasible color, i.e., the first color which does not already contain a conflicting vertex. It proceeds in order until each vertex is assigned a color. The motivation for this approach is twofold. First, it runs in $O(|V_c|)^2$ time and so satisfies our polynomial-time objective. And second, while greedy coloring can be far from optimal in general, it tends to perform well in our setting of a conflict graph based on local interference constraints. 

Greedy coloring has been studied specifically under $\phi$-hop interference to solve the maximum weighted independent set problem. When the connectivity graph is geometric, i.e., nodes are fixed on a plane and share a link if they are within some distance of each other, it has been shown to be a constant factor from optimal in the worst case~\cite{sharma2006complexity}. By iteratively solving this problem with equal weights, it is easy to see that greedy is a constant factor from optimal for the unweighted coloring problem. Similarly, on random geometric graphs with chromatic number $\chi$, it has been shown that greedy colors the graph with at most $\chi+1$ colors with high probability as the size of the network goes to infinity~\cite{McdiarmidColin2011Otcn}. In developing our greedy algorithm, we are motivated by connectivity graphs that can be represented as geometric graphs, or behave similarly. Simulation results in the next section verify the effectiveness of this choice on a variety of graph topologies.

An optimal coloring of any graph can be found by greedy when vertices are ordered correctly, simply by taking the optimal solution, ordering vertices by color, and running greedy on that order. On the other hand, there are pathological examples where greedy uses $|V_c|/2$ colors to color a $2$-colorable graph by using a bad ordering. Saying that greedy performs well on a graph effectively means that order does not matter, and that any arbitrary ordering yields approximately the same result for the unweighted coloring problem. However, recall that $(P2)$ is a \textit{weighted} coloring problem, where the weight of each color $s$ is $1/k_s = \max_{v \in s} 1/k_v$, which is tightly coupled to the coloring itself. Then even if the number of colors returned by greedy is largely independent of order, the weighted solution and density of the resulting vector $\boldsymbol{k_{\mathcal{S}}}$ may not be.

Clearly if we knew the coloring and the value of the optimal $k_v^*$ a priori for each $v$, then ordering vertices by $\boldsymbol{k^*}$ would be sufficient for greedy to recover the optimal solution. To approximate this, we solve a relaxation of $(P2)$ for an estimate $\boldsymbol{\hat{k}}$, and order vertices by these values in our greedy algorithm to form activation sets $\mathcal{S}$. Then with the coloring fixed, we solve the unrelaxed problem $(P2)$ for $\boldsymbol{k_{\mathcal{S}}}$.

Denote node $v$'s activation set after coloring as $s(v)$, and the size of $s(v)$ as $\omega_v$. Then the objective of $(P2)$ is equivalent to 
\begin{equation}\label{eq:p2equivalence}
    \sum_{s \in \mathcal{S}} \frac{1}{k_s} = \sum_{s \in \mathcal{S}} \frac{1}{|s|} \sum_{v \in s} \frac{1}{k_v} = \sum_{v \in V_c} \frac{1}{\omega_v k_v},
\end{equation}
where the first equality follows from the $(P2)$ constraint $k_v = k_s$, for all $v \in s$. This shows that the optimization only depends on the size of the activation sets $\omega_v$, and not on the coloring itself.

To approximate the optimal ordering for greedy, we replace the objective with this representation and begin by relaxing the $k_v = k_s$ constraint and the integrality condition on the vector $\boldsymbol{k_{\mathcal{S}}}$. This yields the program
\begin{align}\label{eq:relaxedgreedy}
\begin{aligned}
    \min_{\mathcal{S}, \boldsymbol{k}} \ &\sum_{v \in V_c} \frac{1}{\omega_v k_v} \\
    \text{s.t.} \ &\sum_{v \in \mathcal{T}^{(i)}} k_v \leq \tau_i, \ \forall \ f_i \in \mathcal{F}_d, \\
    &k_v \leq \frac{c_v}{\lambda_v}, \ \forall \ v \in V_c, \\
    &k_v \geq 1, \ \forall \ v \in V_c.
\end{aligned}
\end{align}

Because the solution depends only on the size of the activation sets and the vector $\boldsymbol{k}$, we approximate the solution to~\eqref{eq:relaxedgreedy} in the following way. We start by assuming each node is assigned its own color, and fix each $\omega_v$ to an estimate $\hat{\omega}_v = 1$, for all $v \in V_c$. When these values are fixed, the program is completely independent of $\mathcal{S}$ and becomes a convex program in $\boldsymbol{k}$. We solve this program and greedily color the graph with vertices ordered by weights $k_v$ from the solution. Then, using this new coloring, we update the estimates $\hat{\omega}_v$ with the size of node $v$'s activation set.

We repeat this process iteratively until reaching a vector of estimates $\boldsymbol{\hat{\omega}}$ that no longer improves the solution to~\eqref{eq:relaxedgreedy}. Let $\mathcal{S}_{WGC}$ denote the set of all colorings found while iterating in this way. Then we fix activation sets to those found in the previous iteration, which minimize the solution to~\eqref{eq:relaxedgreedy} over all $\mathcal{S} \in \mathcal{S}_{WGC}$.

Finally, we re-solve $(P2)$ for $\boldsymbol{k_{\mathcal{S}}}$ under this fixed $\mathcal{S}$. Note that this once again enforces integrality of $\boldsymbol{k_{\mathcal{S}}}$, so any solution satisfies the conditions in~\eqref{eq:rateconditions}. Furthermore, note that all vertices that belong to a set $s$ share the same inter-scheduling time $k_s$, so the number of integer variables is reduced from $|V_c|$ in the original problem $(P2)$ to $|\mathcal{S}|$. This can be written as the following mixed-integer convex program,
\begin{align}\label{eq:integersol}
\begin{aligned}
    \min_{\boldsymbol{k_{\mathcal{S}}}} \ &\sum_{s \in \mathcal{S}} \frac{1}{k_s} \\
    \text{s.t.} \ &\sum_{s \in \mathcal{S}} n_{i,s} k_s \leq \tau_i, \ \forall \ f_i \in \mathcal{F}_d, \\
    &k_s \leq \frac{c_v}{\lambda_v}, \ \forall \ v \in s, \ s \in \mathcal{S},\\
    &k_s \in \mathbb{Z}, \ \forall \ s \in \mathcal{S},
\end{aligned}
\end{align}
where $n_{i,s} = |s \cap \mathcal{T}^{(i)}|$ is the number of links in $f_i$'s route that belong to set $s$.

Greedy never uses more than $\Delta(G_c)+1$ colors, so there are at most $\prod_{s=0}^{\Delta(G_c)} k_{s,\max}$ combinations of integer solutions. Here $k_{s,\max}$ is the largest value $k_s$ can take from slice capacity constraints, and is trivially less than the smallest deadline of any flow which traverses it. Therefore, when the degree of the conflict graph is bounded, this can be upper bounded by a constant, and~\eqref{eq:integersol} can be solved with polynomial complexity. In theory, this constant upper bound can be large, but in practice branch-and-bound and cutting plane techniques make modern mixed-integer solvers incredibly efficient at solving integer programs with relatively few variables. The full Weighted Greedy Coloring (WGC) algorithm is shown in Algorithm~\ref{alg:weightedgreedy}.

\begin{algorithm}
    \DontPrintSemicolon
    %
    \SetKwInput{Input}{Input}\SetKwInOut{Output}{Output}
    \Input{Conflict graph $V_c$, rates $\boldsymbol{\lambda}$, deadlines $\boldsymbol{\tau}$, routes $\mathcal{T}$, capacities $\boldsymbol{c}$}
    \Output{Activation sets $\mathcal{S}$ and maximum inter-scheduling times $\boldsymbol{k}_{\mathcal{S}}$}
    Set $\hat{\omega}_v = 1$, for all $v \in V_c$ \\
    Set $\rho^* = \infty$, $\mathcal{S} = []$ \\
    Solve~\eqref{eq:relaxedgreedy} with fixed $\boldsymbol{\omega} = \boldsymbol{\hat{\omega}}$. Denote the solution as $\boldsymbol{k^*}$ \\
    \While{$\rho(\boldsymbol{k^*}) < \rho^*$}{
        Set $\rho^* = \rho(\boldsymbol{k^*})$, $\mathcal{S}^* = \mathcal{S}$ \\
        Initialize activation set $s_1 = \varnothing$, conflict set $\xi_1 = \varnothing$ \\
        Set $\mathcal{S} = [(s_1, \xi_1)]$ \\
        Sort $V_c$ by value of $\boldsymbol{k}^*$, from smallest to largest \\
        \For{each $v \in V_c$}{
            Set $s(v) = 0$ \\
            \For{each $i \in  len(\mathcal{S})$}{
                \If{$v \notin \xi_i$}{
                    Add $v$ to $s_i$ \\
                    Set $s(v) = i$ \\
                    break \\
                }
            }
            \If{$s(v) == 0$}{
                Initialize activation set $s_{i+1} = \{v\}$, conflict set $\xi_{i+1} = \varnothing$ \\
                Set $s(v) = i+1$ \\
                Append $(s_{i+1},\xi_{i+1})$ to $\mathcal{S}$ \\
            }
            \For{$v' \in V_c \setminus v$}{
                \If{$(v,v') \in E_c$}{
                    Add $v'$ to $\xi_{s(v)}$ \\
                }
            }
        }
        \For{each $s \in \mathcal{S}$}{
            \For{each $v \in s$}{
                Set $\hat{\omega}_v = |s|$ \\
            }
        }
        Solve~\eqref{eq:relaxedgreedy} with fixed $\boldsymbol{\omega} = \boldsymbol{\hat{\omega}}$. Denote the solution as $\boldsymbol{k^*}$ \\
    }
    Set $\mathcal{S} = \mathcal{S}^*$ and solve~\eqref{eq:integersol} for $\boldsymbol{k_{\mathcal{S}}}$

    Return $\mathcal{S}$ and $\boldsymbol{k_{\mathcal{S}}}$
    
    \caption{Weighted Greedy Coloring (WGC)}
    \label{alg:weightedgreedy}
\end{algorithm}

\subsection{Schedule Construction}

Given the coloring $\mathcal{S}$ and scheduling vector $\boldsymbol{k_{\mathcal{S}}}$ returned by $WGC$, all that remains is to construct a satisfying schedule using the state-of-the-art pinwheel scheduler $S_{xy}$. Known as a fast online scheduler, $S_{xy}$ takes the vector $\boldsymbol{k_{\mathcal{S}}}$ as input and either produces a schedule with a single activation set $s \in \mathcal{S}$ to be activated in each time slot, or returns false if it cannot find a satisfying schedule. It determines success or failure in $O(|\mathcal{S}|^2)$ time, but returns the schedule on a slot-by-slot basis because the length of the schedule can grow exponentially large, and computing the entire schedule offline may not be possible in polynomial time. 

The algorithm searches for two values $x$ and $y$, and augmented values $k_s' \leq k_s$, for each $k_s$ in $\boldsymbol{k_{\mathcal{S}}}$, such that each $k_s'$ is a power of $2$ multiple of either $x$ or $y$. This results in a partitioning $X$ and $Y$, where we denote the set of values that are multiples of $x$ as $X$, the vector of their inter-scheduling times as $\boldsymbol{k_X}$, and likewise for $Y$ and $\boldsymbol{k_Y}$. By construction, both of these vectors are step-down, and if 
\begin{equation}\label{eq:sxycond}
    \frac{\lceil x \rho(\boldsymbol{k_X}) \rceil}{x} + \frac{\lceil y \rho(\boldsymbol{k_Y}) \rceil}{y} \leq 1,
\end{equation}
then the algorithm returns a satisfying schedule. We omit the full details of the algorithm and encourage readers to reference~\cite{chan_general_1992} for complete details.

Together, $WGC$ and $S_{xy}$ form a polynomial-time algorithm that takes as input a conflict graph $V_c$ with fixed interference constraints and link capacities $\boldsymbol{c}$, fixed rates and deadlines $\boldsymbol{\lambda}$ and $\boldsymbol{\tau}$, and fixed routes $\mathcal{T}$ for all flows. If it is able to find a feasible solution, it returns a deterministic coloring $\mathcal{S}$ of the conflict graph, and a deterministic decision of which activation set to schedule in each time slot. Therefore, running this algorithm in a decentralized fashion at each node generates a coordinated, conflict-free schedule that is guaranteed to support the set of flows $\mathcal{F}$ and requires no control overhead.

\section{Numerical Results}\label{sec:simulations}

In this section we evaluate the performance of our policies in a variety of scenarios, beginning with the solitary flow greedy policy.

\subsection{Solitary Flow Greedy Policy}

Recall that for a solitary flow under total interference and bottleneck slice widths, the greedy policy in Corollary~\ref{cor:greedybound} was shown to be optimal, and we conjectured that it is near-optimal for any value of $\phi$. Furthermore, we showed an approximate bound for the maximum delay seen under a greedy policy based on the idea that the policy will never let a queue grow too much larger than its slice width. Denote the actual maximum packet delay observed after running greedy as $\hat{\tau}(greedy)$, and define the greedy delay ratio
\begin{equation}
    \zeta \triangleq \frac{\lambda \hat{\tau}(greedy)}{\sum_{e \in \mathcal{T}} w_e'}
\end{equation}
as the ratio of this value to the approximate bound on $\tau^*(greedy,\boldsymbol{w'},\lambda)$ in~\eqref{eq:taugreedy}.

\begin{figure}
    \centering
    \includegraphics[width=0.4\textwidth]{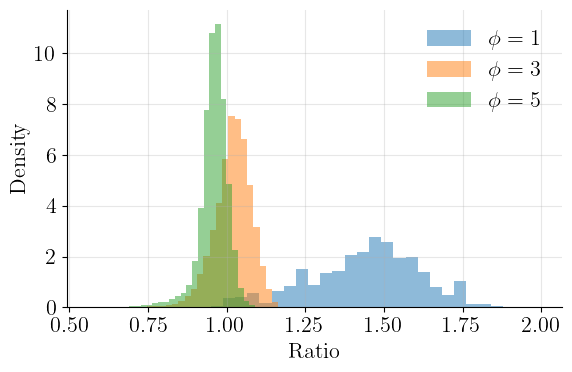}
    \caption{Histogram of solitary flow greedy delay ratios $\zeta$}
    \label{fig:solitary-greedy}
\end{figure}

In Figure~\ref{fig:solitary-greedy}, we show a histogram of $\zeta$ seen after running greedy under a variety of different slice widths and interference models. Using a route with a length of $6$ hops, we varied the value of $\phi$ from $1$ (primary interference) to $5$ (total interference). For each value of $\phi$, we generated $30,000$ random sets of slice widths. For the first third of the sets, we drew values of $\boldsymbol{w}$ from a normal distribution with a mean of $55$ and standard deviation of $15$. For the second third, we drew from a uniform distribution on the interval $[10,100]$, and for the final third we drew from a bimodal distribution, with values normally distributed around $20$ with probability $0.5$ and normally distributed around $100$ otherwise. 

For each set of slice widths, we computed the maximum rate $\lambda^*(\boldsymbol{w})$ and the bottleneck slice widths $\boldsymbol{w'} \leq \boldsymbol{w}$. Figure~\ref{fig:solitary-greedy} shows a histogram of the greedy delay ratio for the combined data with rate $\lambda^*(\boldsymbol{w})$ under these bottleneck slices. In the total interference case, nearly all of the histogram mass is below $1$, indicating that the bound is close to being exact when greedy is optimal. As the value of $\phi$ decreases, we expect more instances to arise where greedy is not optimal as discussed in Section~\ref{sec:isolatedflows}, and we see a corresponding increase in both mean and variance of $\zeta$. Nevertheless, even under primary interference the ratio remains close to $1$. This reinforces greedy as a good heuristic for minimizing deadlines under any value of $\phi$.

\subsection{Weighted Greedy Coloring}

Next we turn to simulating the $WGC + S_{xy}$ algorithm. In an effort to demonstrate the versatility and performance of the algorithm, we show results for a variety of network topologies and traffic patterns. We will also show results comparing it to the Credit-Based Heuristic ($CBH$) algorithm from~\cite{chilukuri_delay-aware_2015}, which is the best known algorithm for meeting deadlines in multi-hop wireless networks with interference and fixed arrivals. Our results show that we are able to significantly improve on the $CBH$ algorithm and meet deadlines on the order of milliseconds over multiple hops, while remaining close to throughput optimality.

\begin{figure}
    \centering
    \includegraphics[width=0.4\textwidth]{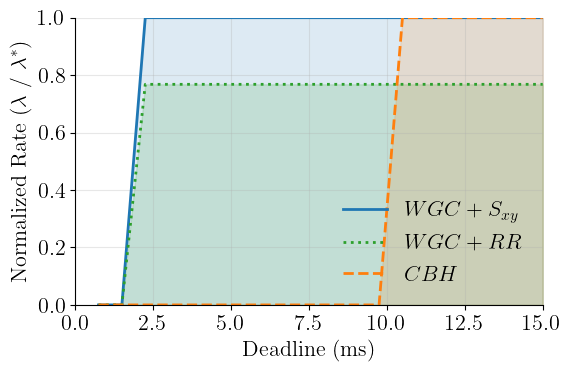}
    \includegraphics[width=0.4\textwidth]{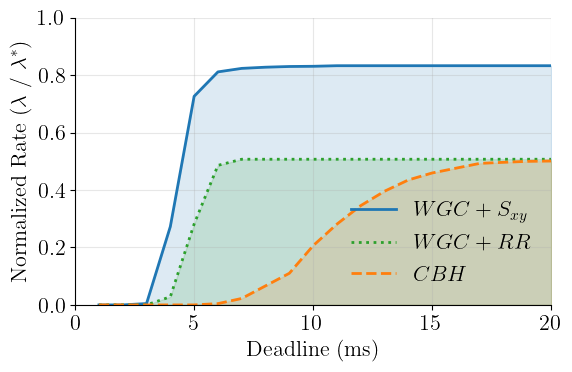}
    \includegraphics[width=0.4\textwidth]{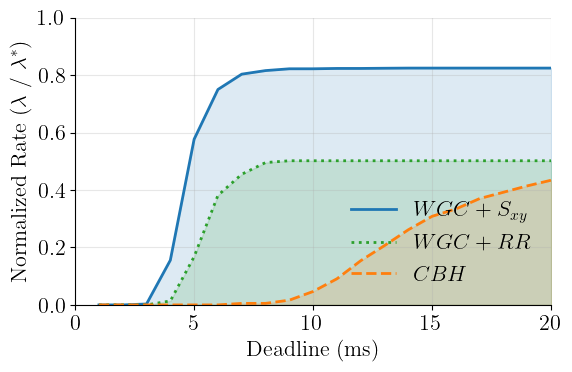}
    \caption{Average feasibility regions for a sink tree topology with depth $3$ and degree $4$ (top), a $4x4$ mesh grid (middle), and random topologies generated by randomly failing links in a $4x4$ mesh grid with probability $p=0.25$ (bottom).}
    \label{fig:lambda-tau-feas}
\end{figure}

For ease of exposition, in each experiment we set rates and deadlines equal to the same $\lambda$ and $\tau$ respectively across all flows, and we generate $32$ flows with random source/destination pairs unless otherwise specified. We set the duration of each time slot to be $125 \mu s$, which is the smallest time slot typically used by data in the 5G standard~\cite{3gpp-ts-138211}. The $CBH$ algorithm assumes that all rates are integer-valued, so for the most accurate comparison we round each rate to the nearest integer when comparing our results to $CBH$. In each plot, we show the average of $200$ random instances, where the randomness is over the flow generation and topology (when random). This is a sufficiently large sample so that the values shown are within $5\%$ of the true value with $95\%$ confidence.

Under primary interference ($\phi=1$), we are able to compute the throughput-optimal $\lambda^*$ independent of deadlines in polynomial time, using the well-known algorithm from Hajek and Sasaki~\cite{hajek_link_1988}. In many of our experiments, we normalize the achievable rates under each algorithm by $\lambda^*$, so that if an algorithm achieves a normalized rate of $1$ for a given deadline $\tau$, it means that this deadline can be achieved without any loss in throughput. The $(\lambda,\tau)$ feasibility curve shows the loss in throughput as deadlines become more strict. Note that even as deadlines become large, our algorithm may not be able to achieve throughput optimality because the $WGC$ algorithm forces each link to belong to a single activation set.

Recall that our algorithm takes as input a set of flows with known rates and deadlines, and attempts to find a feasible schedule to simultaneously satisfy all service guarantees. If an algorithm is able to find such a schedule for a given $(\lambda,\tau)$ pair, then that pair lies within the feasibility region for that algorithm. In Figure~\ref{fig:lambda-tau-feas}, we show the $(\lambda, \tau)$ feasibility region under each algorithm and for several different topologies and traffic patterns with $\phi = 1$.

The top of the figure shows results for a sink-tree topology with depth $3$, degree $4$, and a single flow arriving at each leaf. This makes the traffic highly symmetric and amenable to the $CBH$ algorithm, which performs best under tree topologies. As a result, $CBH$ is able to achieve throughput optimality for all deadlines above $\approx 10 ms$. $WGC+S_{xy}$ is able to do the same, however, for deadlines above $\approx 2.5 ms$. This shows that our algorithm is able to take similar advantage of the symmetry and guarantee throughput optimality with less than $1 ms$ of latency per hop. $WGC$ combined with a simple round-robin schedule is able to achieve the same deadlines but only about $75\%$ of the optimal throughput.

In the middle figure, we show results for a $4x4$ mesh grid topology and $32$ flows with random source/destination pairs. $WGC + S_{xy}$ can achieve significantly tighter deadlines and higher throughput in this setting compared to $CBH$. The bottom figure shows similar results for a random topology, where we start with a $4x4$ mesh grid and allow links to fail independently with probability $p=0.25$ (repeating this process if the resulting graph is not connected). As topologies and traffic flow become less symmetric, $WGC+S_{xy}$ is able to handle this gracefully and the performance gap increases significantly between $WGC+S_{xy}$ and the other policies.

\begin{figure}
    \centering
    \includegraphics[width=0.4\textwidth]{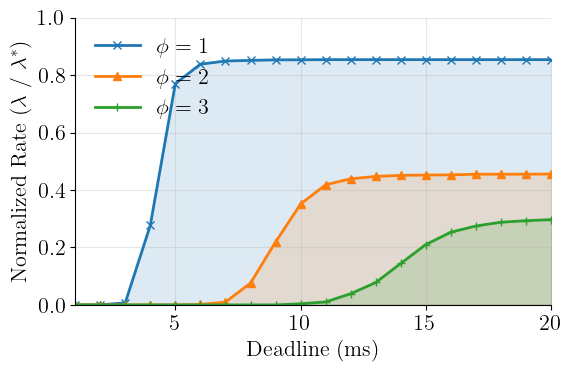}
    \includegraphics[width=0.4\textwidth]{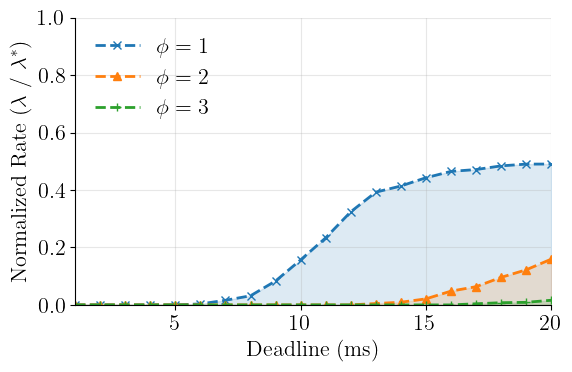}
    \caption{Average feasibility regions for a $4x4$ mesh grid topology and varying values of $\phi$, showing the performance of $WGC+S_{xy}$ (top) and $CBH$ (bottom). All rates are normalized with respect to the optimal $\lambda^*$ when $\phi=1$.}
    \label{fig:phi-sweep}
\end{figure}

We next examine how the interference model affects performance. In Figure~\ref{fig:phi-sweep}, we show the feasibility regions for a $4x4$ mesh grid under $\phi$-hop interference for varying values of $\phi$. All rates are normalized with respect to the value of $\lambda^*$ under $\phi=1$, so the comparison shown in each plot is the relative achievable rate across values of $\phi$. Naturally, as $\phi$ increases, both achievable rates and deadlines suffer as fewer links are able to be activated in each slot. Once again, the performance of $WGC+S_{xy}$ shows significant improvement over $CBH$ under any value of $\phi$.

\begin{figure}
    \centering
    \includegraphics[width=0.45\textwidth]{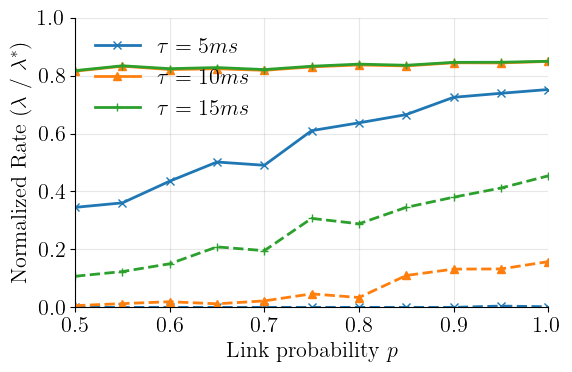}
    \caption{Average feasible normalized rate for a $4x4$ mesh grid with $\phi=1$, where each link is on iid with probability $p$, and with varying deadlines. The solid curves show the performance of $WGC+S_{xy}$ and the dashed curves show the performance of $CBH$.}
    \label{fig:link-prob-sweep}
\end{figure}

We observed in Figure~\ref{fig:lambda-tau-feas} that when links randomly fail in a $4x4$ mesh grid, the performance of each algorithm suffers. Figure~\ref{fig:link-prob-sweep} shows the change in performance as topologies become less structured and more random, with solid curves representing $WGC+S_{xy}$ and dashed curves representing $CBH$. We start with a $4x4$ mesh grid, and determine whether each link is active iid with some probability $p$. We again plot the normalized rate for each algorithm and three different deadline values as we sweep $p$ from $0.5$ to $1$. When $p=1$, we recover the $4x4$ mesh grid, and as $p$ becomes smaller, the topology becomes less structured. In any instance, if the resulting topology is not connected, we regenerate the topology. The figure shows that each algorithm improves with more structure in the topology, but for deadlines above $10 ms$, the change in performance for $WGC+S_{xy}$ is negligible, showing its versatility under any network topology.

\begin{figure}
    \centering
    \includegraphics[width=0.45\textwidth]{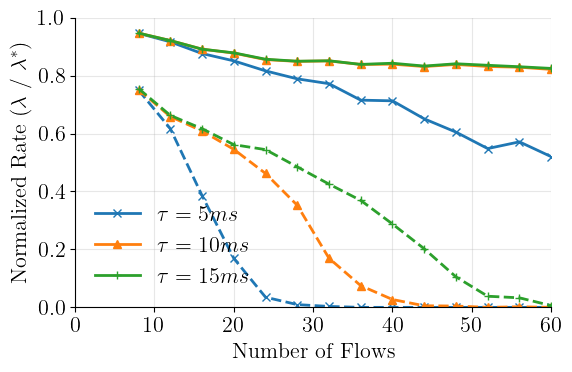}
    \caption{Average feasible normalized rate for a $4x4$ mesh grid with $\phi=1$, and with varying number of flows and deadlines. The rates are normalized separately for each set of flows, so the performance dropoff is not due to network saturation but rather scheduling limitations. The solid curves show the performance of $WGC+S_{xy}$ and the dashed curves show the performance of $CBH$.}
    \label{fig:num-flows-sweep}
\end{figure}

Finally, we examine how the number of flows and different service guarantees that must be met simultaneously affect performance. In Figure~\ref{fig:num-flows-sweep}, we plot the normalized rate as a function of the number of flows, again denoting $WGC+S_{xy}$ with solid curves and $CBH$ with dashed curves. We show results for a $4x4$ mesh grid topology and normalize rates separately for each instance, so that any performance dropoff as the number of flows increases is not due to network saturation, but rather the ability to schedule more flows simultaneously. The performance of each algorithm drops off as the number of flows increases, as it becomes harder to satisfy deadlines for more unique source/destination pairs, but once again the dropoff is small under $WGC+S_{xy}$ for deadlines above $10 ms$.

Each of these results shows the ability of $WGC+S_{xy}$ to meet tight deadlines on the order of milliseconds across multiple hops in wireless networks with general topology, interference, and traffic patterns. Moreover, our algorithm is able to meet these deadlines while achieving near-optimal throughput in many cases. It accomplishes this in polynomial time and determines offline whether a network can support a given set of flows, making it a significant step forward in scheduling to meet tight deadlines in wireless networks with interference.

\section{Conclusion}\label{sec:conclusion}

In this paper, we analyzed the impact of wireless interference on scheduling for service guarantees. We defined throughput- and deadline-optimal policies for a solitary flow, and showed that even when queues are stable, packets can experience large delay. To alleviate this problem, we derived conditions on end-to-end delays in terms of inter-scheduling times, and showed that we can meet tight deadline guarantees under any interference model by solving a generalized version of pinwheel scheduling. Finally, we developed a heuristic algorithm to solve this problem in polynomial time, which can achieve deadlines on the order of milliseconds and near-optimal throughput under arbitrary network topologies and traffic patterns. Future work includes optimizing routing as well as scheduling, and expanding our policy to support unreliable links, changing wireless topologies, and stochastic traffic.

\appendix

\subsection{Proof of Theorem~\ref{th:cyclicschedules}}\label{app:cyclicschproof}

We start by showing that a policy $\pi'$ supports $\mathcal{F}$ if and only if at most $\lambda_i \tau_i$ packets from $f_i$ are present in the system at the end of each slot. To show the forward direction, recall that arrival rates are fixed on the interval $0 \leq t \leq T$, so exactly $\lambda_i$ packets of $f_i$ arrive in each slot. Packets are served in a FCFS manner, so all packets must have arrived in the last $\tau_i$ slots. Similarly, if there were more than $\lambda_i \tau_i$ packets, then at least one must have arrived prior to the last $\tau_i$ slots, so the condition is both necessary and sufficient.

Define the state of the system at time $t$ under a supporting policy $\pi'$ as the length of each queue and denote it by $\boldsymbol{Q^{\pi'}}(t)$. Queue lengths are bounded by the argument above, and there exists a finite number of states which can be visited over any time horizon $T$. In particular, for sufficiently large $T$, there must exist a state $\boldsymbol{Q^{\pi'}}(t_0)$ which occurs at time $t_0$ and then occurs again at time $t_0+K$ for some $K > 0$. Let $\boldsymbol{Q^{\pi'}}(t_0)$ be the first state where this event occurs, and let the set of actions taken in the time interval $[t_0,t_0+K)$ be the policy $\pi$ with $K^{\pi} = K$ and $\mu^{\pi}(t) = \mu^{\pi'} \big((t-t_0) \bmod K^{\pi} + t_0 \big)$ for all $t \geq 0$. Then for all $t_0 \leq t \leq T$, we have $\boldsymbol{Q^{\pi}}(t) = \boldsymbol{Q^{\pi'}} \big((t-t_0) \bmod K^{\pi} + t_0 \big)$, so~\eqref{eq:queuebound} holds and $\pi$ supports $\mathcal{F}$ for all $t_0 \leq t \leq T$.

Note that $t_0 > 0$ because queues are empty at $t=0$ and the system requires time to ``ramp up''. However, we next show that $\pi$ supports $\mathcal{F}$ on the interval $0 \leq t < t_0$ using induction on the queue sizes. The queue evolution equations are given by 
\begin{equation}
    Q_{i,e}^{\pi}(t+1) = \min\{ Q_{i,e}^{\pi}(t) + \tilde{\lambda}_{i,e}(t) - \mu_e^{\pi}(t) w_{i,e}, \ 0 \},
\end{equation}

for all $0 \leq t \leq T$, where
\begin{equation}
    \tilde{\lambda}_{i,e}(t) = \begin{cases}
        \lambda_i, &e = \mathcal{T}^{(i)}_0, \\
        \mu_{e_{-1}}^{\pi}(t) \cdot \min \{w_{i,e_{-1}}, \ Q_{i,e_{-1}}^{\pi}(t) \}, &e \neq \mathcal{T}^{(i)}_0,
    \end{cases}
\end{equation}

and with a slight abuse of notation we denote the link preceding $e$ in $\mathcal{T}^{(i)}$ as $e_{-1}$, where it is understood we are referring to $f_i$. Now assume that $Q_{i,e}^{\pi}(t) \leq Q_{i,e}^{\pi}(t+K^{\pi})$ for all $i$ and $e$. Then, from the queue evolution equations and given that $\mu^{\pi}(t) = \mu^{\pi}(t+K^{\pi})$, this implies that $Q_{i,e}^{\pi}(t+1) \leq Q_{i,e}^{\pi}(t+K^{\pi}+1)$ for all $i$ and $e$ as well. By definition, $Q_{i,e}^{\pi}(0) \leq Q_{i,e}^{\pi}(K^{\pi})$ for all $i$ and $e$, which completes the induction step. Therefore,
\begin{equation}
    \sum_{e \in \mathcal{T}^{(i)}} Q_{i,e}^{\pi}(t) \leq \sum_{e \in \mathcal{T}^{(i)}} Q_{i,e}^{\pi}(t+nK^{\pi}) \leq \lambda_i \tau_i
\end{equation}

for all $f_i$ and $0 \leq t < t_0$, and some $n>0$ such that $t_0 \leq t+nK^{\pi} < t_0 + K^{\pi}$, which shows that $\pi$ supports $\mathcal{F}$ on the interval $0 \leq~t < t_0$.

It can easily be shown that $\pi$ also supports $\mathcal{F}$ on the interval $t > T$. Assume for all $t > T$, dummy packets arrive with the same rate $\lambda_i$ for each flow, until all real packets have been delivered. Then, because $\pi$ supports every flow under regular arrivals, it must also support every flow under the dummy arrivals. This completes the proof.

\subsection{Proof of Corollary~\ref{cor:ratestability}}\label{app:stabilityproof}

We show this by contradiction. Assume $\bar{\mu}_e^{\pi} < \lambda_i \ w_{i,e}$ for some $f_i$ and $e \in T^{(i)}$, so equivalently
\begin{equation}\label{eq:stabilityeq}
    \bar{\mu}_e^{\pi} K^{\pi} w_{i,e} = \eta_e^{\pi} w_{i,e} < \lambda_i K^{\pi}.
\end{equation}
Recall that $\eta_e^{\pi}$ is the number of times a link is scheduled in a scheduling period, and each time it serves $\min\{ w_{i,e}, Q_{i,e}(t)\}$ packets. Then the total number of packets served by link $e$ in any period is at most the left-hand side of~\eqref{eq:stabilityeq}, while the right-hand side is the total number of packets that arrive to the network in that period.

Therefore, after one period, the total number of $f_i$ packets in flight from the source to link $e$ is at least the difference in these quantities. Because the inequality is strict, this must be at least as large as some $\epsilon > 0$. This holds for each scheduling period, so after $\Omega$ scheduling periods, the total number of $f_i$ packets in flight is at least $\epsilon \Omega$. Now let $\Omega \to \infty$, so this quantity also goes to infinity because $\epsilon$ is strictly positive.

In order for a policy to support $\mathcal{F}$, all queue sizes must be bounded by a finite constant, else the policy cannot meet any finite deadline. There are a finite number of queues along $T^{(i)}$, so at least one of them must also grow without bound, which is a contradiction.

\subsection{Proof of Theorem~\ref{th:orrbound}}\label{app:orrproof}

The $ORR$ policy schedules links in order from source to destination in subsequent time slots according to the definition above. The activation rate follows because each link is only activated once per scheduling period. Now assume a packet arrives at the source at some time $t$, and that slice widths are large enough to serve all enqueued packets when a link is scheduled. This is non-restrictive because $\tau_i^*$ is defined as the smallest feasible deadline for an arbitrarily small throughput. At time $t \bmod{(\phi+1)}$, the packet is served at link $\mathcal{T}^{(i)}_0$, and following the $ORR$ schedule, it is served at each subsequent link in the next $|\mathcal{T}^{(i)}|-1$ slots. In the worst case, the packet must wait $\phi$ slots at the source before being served, so it spends a maximum of $|\mathcal{T}^{(i)}|+\phi$ slots in the network, which verifies $\tau_i^*$ for the $ORR$ policy.

It remains to show that no other policy can achieve a smaller deadline for all packets. Recall that only one of $\{\mathcal{T}^{(i)}_0, \dots, \mathcal{T}^{(i)}_{\phi} \}$ can be scheduled in the same slot. We claim that it must take at least some packets $2\phi+1$ slots to reach link $\mathcal{T}^{(i)}_{\phi+1}$. The fewest slots a packet can take is $\phi+1$, so we define $\Delta(t)$ as the number of additional slots it takes beyond this minimum for packets which arrive at the source at time $t$. Then if $\Delta(t) \geq \phi$ for any $t$, our claim must hold. Note that $\Delta(t)=0$ at time $t$ when packets arrive, and it is incremented by one each time a scheduling decision is made that does not schedule those packets.

Assume our claim does not hold, i.e., that $\Delta(t) < \phi$ for all $t$. First note that packets which arrive at $t$ can allow at most $\Delta(t)$ slots of arrivals behind them to ``catch up''. We say a slot of arrivals $t+j$ is caught up to $t$ if it is enqueued at the same link as the slot of arrivals $t$. Each slot of arrivals which catches up to $t$ increments $\Delta(t)$, so arrivals from slot $t+\phi$ cannot be caught up to $t$ under our assumption on $\Delta(t)$. Let $t+j^* \leq t+\phi$ be the first slot which is not caught up to $t$ when packets from slot $t$ reach $\mathcal{T}^{(i)}_{\phi+1}$. Because the previous slot of arrivals is caught up to $t$, it must have been scheduled $\phi$ times independently of the arrivals from slot $t+j^*$. Therefore, $\Delta(t+j^*) \geq \phi$, which is a contradiction. This proves the result.

\subsection{Proof of Corollary~\ref{cor:greedybound}}\label{app:greedyboundproof}

In any policy with bottleneck slice widths, a link $e$ must serve a full slice width of packets each time it is scheduled (after the ramp-up period). If not, it would serve fewer than $\eta_e w_e = \lambda K^{\pi}$ packets per scheduling period, and the queue would become unstable. Therefore the first condition of greedy must always hold for at least one link $e$ after the ramp-up period.

Now assume there are two queues which are ``full,'' i.e., contain at least a slice width of packets at time $t$, and denote them as links $e$ and $e'$ respectively. Without loss of generality let link $e'$ be the closest full queue to the destination, and assume that link $e$ is scheduled at $t$. We will show that this cannot be optimal by contradiction. When link $e$ is served, $w_e$ packets are served to the next queue in the route, but remain behind the packets already queued at link $e'$. Future actions are taken, and at some point $t' > t$, link $e'$ is scheduled. No queues could have changed between link $e'$ and the destination on the interval $[t,t')$, because only full queues can be served. Furthermore, because $Q_{e'}$ was already full at time $t$ and queues are FCFS, only the packets which would have been served at $t$ are now served at $t'$.

If link $e'$ is not the destination link, then the state of the system after $t'$ is exactly the same as if link $e'$ had been scheduled at time $t$, and the actions in the interval $[t,t')$ were all delayed by one slot. If link $e'$ is the destination link, then the $w_{e'}$ packets served by link $e'$ could have arrived at their destination sooner had link $e'$ been scheduled at $t$. Therefore, it could not have been optimal to serve link $e$, which is a contradiction.

Bounding the exact value of $\tau^*(greedy, \boldsymbol{w'}, \lambda)$ is challenging because the scheduling decisions at each step of greedy depend on the state of the queues, which creates a recursive relationship. We can approximate $\tau^*$ by observing that $Q_e$ is never allowed to grow much larger than $w'_e$ before being served, and that if link $e$ is scheduled, the subsequent queue $Q_{e+1}$ is strictly smaller than $w'_{e+1}$. Therefore $\sum_{e \in \mathcal{T}} Q_e(t) \lesssim \sum_{e \in \mathcal{T}} w'_e$ at any time $t$, and the result follows from Corollary~\ref{cor:deadlinecondition}.

\subsection{Proof of Theorem~\ref{th:deadlinebound}}\label{app:deadlineboundproof}

Assume without loss of generality that slices have bottleneck slice widths. Allowing slice widths to be larger than this can only decrease packet delay, so this assumption is not only non-restrictive but necessary. Under bottleneck slice widths, $\bar{\mu}_{i,e}^{\pi} w_{i,e} = \lambda_i$ for all $f_i \in \mathcal{F}$ and $e \in \mathcal{T}^{(i)}$ by definition, so the total number of $f_i$ packets served in each period in steady state is $\eta_e^{\pi} w_{i,e} = \bar{\mu}_{i,e}^{\pi} w_{i,e} K^{\pi} = \lambda_i K^{\pi}$. Because this many packets are served at each link, this is the number of arrivals each subsequent link sees per scheduling period, as well as the number of arrivals at the source link.
    
Assume that $Q_{i,e}(t) = 0$ at some time $t$. Then because it serves the same number of packets as arrivals within each scheduling period, the queue must be emptied again at some time $t_{i,e}(t) \leq t + K^{\pi}$. By induction, the queue must be emptied at least once per scheduling period if it is ever empty. Assume that before the network reaches steady-state, dummy packets are added to each queue so that links still serve a full slice width. Because queues are initialized to zero, this completes the induction step.

Let $t_{i,e}$ be any time slot when $Q_{i,e}(t_{i,e}) = 0$. Because link $e$ is scheduled $\eta_e$ times per scheduling period and serves a full $w_{i,e}$ packets each time, the maximum delay that packets can experience at link $e$ occurs when link $e$ is scheduled the $\eta_e$ time slots leading up to time $t_{i,e}$, regardless of when packets arrived. 

Now consider the next link $e+1$ in the route, and let $t_{i,e+1}$ be the first time slot after $t_{i,e}$ where $Q_{i,e+1}(t_{i,e+1}) = 0$. Just like link $e$, the maximum delay packets can experience occurs when $e+1$ is scheduled in $\eta_{e+1}$ consecutive slots leading up to time $t_{i,e+1}$, and where the $\lambda_i K^{\pi}$ packets from link $e$ arrive in the interval $t_{i,e+1} - K^{\pi} < t \leq t_{i,e+1}$. In particular, the worst-case delay occurs when link $e+1$ begins receiving packets from link $e$ at time $t_{i,e+1}-K^{\pi}+1$, immediately after $Q_{i,e+1}$ becomes empty. The same holds for each link in $\mathcal{T}^{(i)}$, and so the worst-case delay occurs when links are scheduled in blocks such that each link $e$ begins scheduling its block immediately after link $e+1$ finishes scheduling.

This is a valid schedule with bottleneck links under any $\phi$. To see this, consider any set of $\phi+1$ consecutive links $\{e,\dots,e+\phi\}$. From interference constraints, $\sum_{j=0}^{\phi} \bar{\mu}_{e+j} \leq 1$, and each link $e'$ is part of at least one set of $\phi+1$ consecutive links where this bound is tight. If not, $\bar{\mu}_{e'}$ could be increased and $w_{i,e'}$ could be decreased, which contradicts our bottleneck slice assumption. Then for each set of $\phi+1$ links, scheduling blocks in reverse order of packet flow fills the entire scheduling period and satisfies activation rates. When two such sets of links overlap or adjoin, the schedule for each block is shifted so the pattern holds for all sets of $\phi+1$ links.

Because packets are scheduled in blocks, we consider the first and last packet in each block. Denote the time the first packet arrives at the source as $t(s_0)$ and the time the last packet in the block arrives at the source as $t(s_{-1}) = t(s_0) + K^{\pi}-1$. Similarly, denote the time the first packet is delivered to its destination as $t(d_0)$ and the time the last packet is delivered as $t(d_{-1}) = t(d_0) + \eta_{-1} -1 < t(d_0) + K^{\pi}-1$, where link $-1$ is the destination link. The total delay seen by the first packet is therefore larger than the last packet, and because all intermediate packets see a gradient of delay between these two values, the first packet sees the largest delay. This packet is the first served in each block, so at each link $e$ it sees a delay of $K^{\pi}-\eta_e^{\pi} = K^{\pi}(1 - \bar{\mu}_e^{\pi})$ before being served. Summing over the route yields the result.

This bound is tight for at least one policy by construction of the policy just described. We say that there are many policies where $\tau_i^*$ grows linearly with $K^{\pi}$, because any policy which schedules links in blocks incurs a delay at each link that is linear in $K^{\pi}$. In fact, from Lemma~\ref{lemma:taulowerbound}, any policy with inter-scheduling times that depend on the schedule length incurs such a delay.

\subsection{Proof of Lemma~\ref{lemma:delaydeficit}}\label{app:delaydeficit}

We will show this by induction, by first showing that if packets arrive at link $e$ with delay deficit less than or equal to zero, then whenever they arrive at the next link in the route, their delay deficit will also be at most zero. Let $Q_{i,e}$ have a counter which tracks the number of packets with strictly positive delay deficit. Because $\delta^l(t) \leq 0$ for all packets $l$ by assumption, no packets are added to the counter on arrival. Of the packets already in the queue, \eqref{eq:delaydeficit} shows that a strictly positive delay deficit implies that a packet's age $a^l(t) > \sum_{\tilde{T}^{(l)}} k_e^{\pi}$ at time $t$. Because $\lambda_i$ packets arrive at the source in each time slot, at most $\lambda_i$ new packets can exceed this bound in each slot, so at most $\lambda_i$ packets are added to the counter in each slot.

Next we claim that when link $e$ is scheduled, it serves all packets with positive delay deficits. Assume this is true at some slot $t$ when link $e$ is scheduled. Then due to the maximum inter-scheduling time, it must be scheduled again within the next $k_e^{\pi}$ slots. We just saw that at most $\lambda_i$ packets can be added to the counter in each slot, so at most $\lambda_i k_e^{\pi}$ packets can have positive delay deficit the next time link $e$ is scheduled. This is a lower bound on the allowable slice width, so all these packets are again served. The first time that link $e$ is scheduled, there can be no more than $\lambda_i k_e^{\pi}$ packets in the network, so it necessarily serves all packets with positive delay deficits, and this completes the induction step.

Therefore, every packet enqueued at link $e$ is served no later than the first time that link $e$ is scheduled after its delay deficit becomes positive. Because the link is scheduled at least every $k_e^{\pi}$ slots, the delay deficit of any packet is at most $k_e^{\pi}$ when it is served. This is exactly the delay quota of link $e$, so no packet can have a delay deficit larger than zero when it arrives at the next link in its route. By definition, packets arrive at the source link with a delay deficit of zero, which completes the induction step and the proof.

\bibliographystyle{IEEEtran}
\bibliography{ton_submission}


\end{document}